\newcommand{\nosemic}{\renewcommand{\@endalgocfline}{\relax}}
\newcommand{\dosemic}{\renewcommand{\@endalgocfline}{\algocf@endline}}
\let\oldnl\nl
\newcommand{\nonl}{\renewcommand{\nl}{\let\nl\oldnl}}
\definecolor{LightCyan}{rgb}{0.88,1,1}
\definecolor{LightCyan2}{rgb}{0.48,0.69,0.99}
\definecolor{LightRed}{rgb}{0.99,0.44,0.43}
\definecolor{blueStrong}{rgb}{0.1,0.16,0.31}
\definecolor{linecolor}{rgb}{0.07,0.33,0.66}
\definecolor{bleu}{rgb}{0.07,0.33,0.76}
\definecolor{bleuvictoria}{rgb}{0.03,0.35,0.61}
\definecolor{bleudepths}{rgb}{0.15,0.39,0.32}
\definecolor{bleumalibu}{rgb}{0.12,0.25,0.96}
\definecolor{rouge}{rgb}{0.83,0.08,0}
\definecolor{racingred}{rgb}{0.74,0.09,0.17}
\definecolor{vert}{rgb}{0.12,0.74,0.13}
\definecolor{gold}{rgb}{1,0.84,0}
\definecolor{gold2}{rgb}{0.99,0.94,0.7}
\definecolor{aurometalsaurus}{rgb}{0.43, 0.5, 0.5}
\definecolor{cadetgrey}{rgb}{0.57, 0.64, 0.69}
\definecolor{lemon}{rgb}{0.94, 0.992, 0.37}
\definecolor{coffee}{rgb}{0.435, 0.305, 0.215}
\definecolor{liver}{rgb}{0.325, 0.294, 0.309}
\definecolor{rust}{rgb}{0.717, 0.155, 0.055}
\definecolor{darkkhaki}{rgb}{0.717, 0.155, 0.055}
\definecolor{tawny}{rgb}{0.74,0.71,0.42}
\definecolor{darkkhaki}{rgb}{0.817, 0.34, 0}
\definecolor{cocoabrown}{rgb}{0.208, 0.157, 0.118}
\definecolor{lilas}{rgb}{0.71, 0.4, 0.82}
\newtheorem{thm}{Theorem}
\newtheorem{lm}{Lemma}
\title{Faster Optimal Coalition Structure Generation via Offline Coalition Selection and Graph-Based Search}
\author{
Redha Taguelmimt$^1$
\and
Samir Aknine$^1$\and
Djamila Boukredera$^2$\and
Narayan Changder$^3$\And
Tuomas Sandholm$^{4,5,6,7}$
\affiliations
$^1$Univ Lyon, UCBL, CNRS, INSA Lyon, Centrale Lyon, Univ Lyon 2, LIRIS, UMR5205, Lyon, France\\
$^2$Laboratory of Applied Mathematics, Faculty of Exact Sciences, University of Bejaia, Bejaia, Algeria\\
$^3$TCG Centres for Research and Education in Science and Technology, Kolkata, India\\
$^4$Computer Science Department, 
  Carnegie Mellon University, Pittsburgh, USA\\
$^5$Strategy Robot, Inc.\\
$^6$Strategic Machine, Inc.\\
$^7$Optimized Markets, Inc.
\emails
redha.taguelmimt@gmail.com, samir.aknine@univ-lyon1.fr, djamila.boukredera@univ-bejaia.dz, narayan.changder@tcgcrest.org, sandholm@cs.cmu.edu
}
\begin{document}

\maketitle

\begin{abstract}
Coalition formation is a key capability in multi-agent systems. An important problem in coalition formation is \textit{coalition structure generation}: partitioning agents into coalitions to optimize the social welfare. This is a challenging problem that has been the subject of active research for the past three decades. 
In this paper, we present a novel algorithm, SMART, for the problem based on a hybridization of three innovative techniques. 
Two of these techniques are based on dynamic programming, where we show a powerful connection between the coalitions selected for evaluation and the performance of the algorithms. 
These algorithms use offline phases to optimize the choice of coalitions to evaluate. The third one uses branch-and-bound and integer partition graph search to explore the solution space. 
Our techniques bring a new way of approaching the problem and a new level of precision to the field. 
In experiments over several common value distributions, we show that the hybridization of these techniques in SMART is faster than the fastest prior algorithms (ODP-IP, BOSS) in generating optimal solutions across all the value distributions.
\end{abstract}

\section{Introduction}

One of the main challenges in coalition formation is the \textit{coalition structure generation (CSG)} problem: partitioning the agents into disjoint exhaustive coalitions so as to maximize social welfare. 
(A \textit{coalition structure} is a partitioning of agents into coalitions.) This is a central problem in artificial intelligence and game theory that captures a number of important applications such as collaboration among trucking companies~\cite{sandholm1997coalitions}, distributed sensor networks~\cite{dang2006overlapping}, etc.

Many algorithms have been developed for this problem. Dynamic programming algorithms~\cite{yeh1986dynamic,rahwan2008improved,michalak2016hybrid,changder2019effective,10097921} find an optimal solution if it is computationally feasible to run them to completion. Anytime algorithms~\cite{sandholm1999coalition,dang2004generating,rahwan2009anytime,Ueda_Iwasaki_Yokoo_Silaghi_Hirayama_Matsui_2010,10098066} provide intermediate solutions during the execution and allow premature termination. Heuristic algorithms~\cite{sen2000searching,Ueda_Iwasaki_Yokoo_Silaghi_Hirayama_Matsui_2010,10.5555/3463952.3464039,9643288} focus on speed and do not guarantee  that an optimal solution is found. 

Even though those algorithms perform well in practice in some cases, hybrid algorithms~\cite{michalak2016hybrid,changder2020odss,Changder_Aknine_Ramchurn_Dutta_2021,ijcai2023p35,10.5555/3635637.3663204} that combine dynamic programming with integer partition graph search have emerged as the dominant approach to find optimal solutions to this problem. The fastest exact algorithms to date are hybrid solutions called ODP-IP~\cite{michalak2016hybrid}, ODSS~\cite{changder2020odss}, and BOSS~\cite{Changder_Aknine_Ramchurn_Dutta_2021} that combine IDP~\cite{rahwan2008improved} and IP~\cite{rahwan2009anytime}. IDP is based on dynamic programming and computes the optimal solution for $n$ agents by computing an optimal partition of all the coalitions $\mathcal{C}$ of size $|\mathcal{C}| \in \{2,...,\frac{2n}{3},n\}$. In contrast, IP uses an integer representation of the search space and computes the optimal solution by traversing in a depth-first manner multiple search trees and uses branch-and-bound to speed up the search. 
However, the worst-case run time of the state-of-the-art hybrid algorithms is determined by their respective dynamic programming parts, which still need improvement. 
Also, the hybridization of IDP and IP in these algorithms relies heavily on the effectiveness of IP. Thus, the time required by the algorithms grows considerably when IP is not fast enough. 
Moreover, these algorithms exhibit very high run times for some  distributions.

In light of this, and to enable faster generation of optimal coalition structures, we develop a new algorithm that combines three complementary techniques to guide the search. The advantage of these techniques is threefold. The first technique,  \textit{Complementarity-Based Dynamic Programming (CDP)}, enables SMART to have the best worst-case time performance of all algorithms to date. \textit{GRadual seArch with Dynamic Programming (GRAD)} enables it to find the optimal solution quickly by exploring a minimum number 
of solution subspaces which shortens the run time. \textit{Distributed Integer Partition Search (DIPS)} further accelerates the search by exploring the subspaces that are most likely to contain the optimal solution. In short, our main contributions are:
\begin{itemize}
    \item We develop a novel algorithm for optimal CSG that combines three new techniques, 
    resulting in a significant performance improvement. Two of these techniques use offline phases to optimize the search. Moreover, we introduce a new complementarity principle in dynamic programming, where the optimal solution is found by combining the evaluation results of two distinct sets of coalitions. We also propose another principle of gradual search in dynamic programming, where percentages of solution subspaces are searched separately. These principles bring a new way of approaching the problem and a new level of precision to the field. 
    \item We devise the fastest dynamic programming algorithm to date, which bounds the run time, and we propose a new way to speed the search for optimal solutions, while exploring only a part of the search space.
    \item We show that our algorithm outperforms existing algorithms when generating optimal solutions. We show that it is i) orders of magnitude faster in producing optimal solutions, and ii) more stable in the run time when varying the distributions and the numbers of agents.
\end{itemize}

\section{Preliminaries}

The input to a CSG problem is a set of agents $\mathcal{A}$ and a characteristic function $v$.  We say that a CSG problem $\mathcal{A} = \{a_{1}, a_{2}, . . . , a_{n}\}$ is of size $n$. A coalition $\mathcal{C}$ in $\mathcal{A}$ is any non-empty subset of $\mathcal{A}$. The size of $\mathcal{C}$ is $|\mathcal{C}|$, which is the number of agents it contains. A size set is a set of coalition sizes. In a CSG problem, a characteristic function $v$ assigns a real value to each coalition $\mathcal{C}$. A coalition structure $\mathcal{CS}$ is a partition of the set of agents $\mathcal{A}$ into disjoint coalitions. 
Given a set of non-empty coalitions $\{\mathcal{C}_{1}, \mathcal{C}_{2},...,\mathcal{C}_{k}\}$, $\mathcal{CS} =\{\mathcal{C}_{1}, \mathcal{C}_{2},...,\mathcal{C}_{k}\}$, where $k= |\mathcal{CS}|$, $\bigcup_{j=1}^{k} \mathcal{C}_{i} = \mathcal{A}$ and for all $i,j \in \{1,2,...,k\}$ where $i \neq j$, $\mathcal{C}_{i} \cap \mathcal{C}_{j} = \emptyset$. 
$\Pi(\mathcal{A})$ denotes the set of all coalition structures. The value of a coalition structure $\mathcal{CS}$ is $V (\mathcal{CS}) = \sum_{\mathcal{C} \in \mathcal{CS}} v(\mathcal{C})$. 
The optimal solution of the CSG problem is the most valuable coalition structure $\mathcal{CS}^{*} \in \Pi(\mathcal{A})$, that is, $\mathcal{CS}^{*}= \mbox{argmax}_{\mathcal{CS} \in \Pi(\mathcal{A})} V(\mathcal{CS})$.

The \textit{integer partition graph}~\cite{rahwan2007near} (see Figure \ref{IPG}) divides the search space into subspaces that are represented by integer partitions of $n$.  Given $n$ agents, each integer partition of $n$ is represented by a node, where the nodes are divided into levels. Each level $l \in \{1,2,..,n\}$ contains nodes representing integer partitions of $n$ that contain $l$ parts. For instance, level 3 contains nodes where integer partitions of $n$ have 3 parts. Two adjacent nodes are connected if the integer partition in level $l$ can be reached from the one in level $l-1$ by splitting only an integer. Each integer partition $\mathcal{P}$ represents a \textit{set} of coalition structures in which the sizes of the coalitions match the parts of $\mathcal{P}$. For example, the node [1,1,2] represents all coalition structures that contain two coalitions of size 1 and one coalition of size 2. Figure \ref{IPG} shows a four-agent example of the integer partition graph. 

For the remainder of this paper, we use the terms solution subspace and node interchangeably.

\begin{figure}[h!]
\centering
\resizebox{1.0\columnwidth}{4.5cm}{%
\begin{tikzpicture}
\large
\node[draw,rectangle,rounded corners=3pt] (a)at(7.5,0){$[1,1,1,1]$};

\node[draw,rectangle,rounded corners=3pt] (a)at(7.5,-2){$[1,1,2]$};

\node[draw,rectangle,rounded corners=3pt] (a)at(5.5,-4){$[1,3]$};

\node[draw,rectangle,rounded corners=3pt] (a)at(9.5,-4){$[2,2]$};

\node[draw,rectangle,rounded corners=3pt] (a)at(7.5,-6){$[4]$};

\node[] (a)at(-0.5,-6){$L_{1}: $};
\draw[>=latex,black!70,dashed,very thick] (7.0,-6) -- (-0.0,-6);
\node[] (a)at(-0.5,-4){$L_{2}: $};
\draw[>=latex,black!70,dashed,very thick] (0.1,-4) -- (-0.0,-4);
\node[] (a)at(-0.5,-2){$L_{3}: $};
\draw[>=latex,black!70,dashed,very thick] (6.5,-2) -- (-0.0,-2);
\node[] (a)at(-0.5,0){$L_{4}: $};
\draw[>=latex,black!70,dashed,very thick] (6.25,0) -- (-0.0,0);

\node[] (a)at(9.5,-6){$\Pi_{[4]}: $};
\node[] (a)at(11.3,-6){$\{\{a_{1},a_{2},a_{3},a_{4}\}\}$};

\node[] (a)at(10.8,-4){$\Pi_{[2,2]}: $};
\node[] (a)at(13,-4){$\{\{a_{1},a_{2}\},\{a_{3},a_{4}\}\}$};
\node[] (a)at(13,-4.5){$\{\{a_{1},a_{3}\},\{a_{2},a_{4}\}\}$};
\node[] (a)at(13,-5){$\{\{a_{1},a_{4}\},\{a_{2},a_{3}\}\}$};

\node[] (a)at(0.8,-4){$\Pi_{[1,3]}: $};
\node[] (a)at(3,-4){$\{\{a_{1}\},\{a_{2},a_{3},a_{4}\}\}$};
\node[] (a)at(3,-4.5){$\{\{a_{2}\},\{a_{1},a_{3},a_{4}\}\}$};
\node[] (a)at(3,-5){$\{\{a_{3}\},\{a_{1},a_{2},a_{4}\}\}$};
\node[] (a)at(3,-5.5){$\{\{a_{4}\},\{a_{1},a_{2},a_{3}\}\}$};

\node[] (a)at(9.0,-2){$\Pi_{[1,1,2]}: $};
\node[] (a)at(11.5,-2){$\{\{a_{1}\},\{a_{2}\},\{a_{3},a_{4}\}\}$};
\node[] (a)at(11.5,-2.5){$\{\{a_{1}\},\{a_{3}\},\{a_{2},a_{4}\}\}$};
\node[] (a)at(11.5,-3){$\{\{a_{1}\},\{a_{4}\},\{a_{2},a_{3}\}\}$};
\node[] (a)at(15.3,-2){$\{\{a_{2}\},\{a_{3}\},\{a_{1},a_{4}\}\}$};
\node[] (a)at(15.3,-2.5){$\{\{a_{2}\},\{a_{4}\},\{a_{1},a_{3}\}\}$};
\node[] (a)at(15.3,-3){$\{\{a_{3}\},\{a_{4}\},\{a_{1},a_{2}\}\}$};

\node[] (a)at(13.4,-2){$,$};
\node[] (a)at(13.4,-2.5){$,$};
\node[] (a)at(13.4,-3){$,$};

\node[] (a)at(9.5,0){$\Pi_{[1,1,1,1]}: $};
\node[] (a)at(12.3,0){{$\{\{a_{1}\},\{a_{2}\},\{a_{3}\},\{a_{4}\}\}$}};

\draw[->,>=latex,vert!75!black!75,very thick] (7.5,-5.7) -- (5.5,-4.3);
\draw[->,>=latex,vert!75!black!75,very thick] (7.5,-5.7) -- (9.5,-4.3);
\draw[->,>=latex,red,dashed,very thick] (5.5,-3.7) -- (7.4,-2.3);
\draw[->,>=latex,vert!75!black!75,very thick] (9.5,-3.7) -- (7.5,-2.3);
\draw[->,>=latex,vert!75!black!75,thick] (7.5,-1.7) -- (7.5,-0.3);

\node[] (a)at(9.3,-5.3){\small$4=2+2$};
\node[] (a)at(9.3,-4.9){\small Split $4$};
\node[] (a)at(5.7,-5.3){\small$4=1+3$};
\node[] (a)at(5.7,-4.9){\small Split $4$};

\node[] (a)at(5.2,-3.3){\small$3=1+2$};
\node[] (a)at(5.2,-2.9){\small Split $3$};
\node[] (a)at(7.95,-3.5){\small$2=1+1$};
\node[] (a)at(7.95,-3.1){\small Split $2$};

\node[] (a)at(6.7,-1.2){\small$2=1+1$};
\node[] (a)at(6.7,-0.8){\small Split $2$};

\end{tikzpicture}%
}
\caption{\small A four-agent integer partition graph.} \label{IPG}
\end{figure}
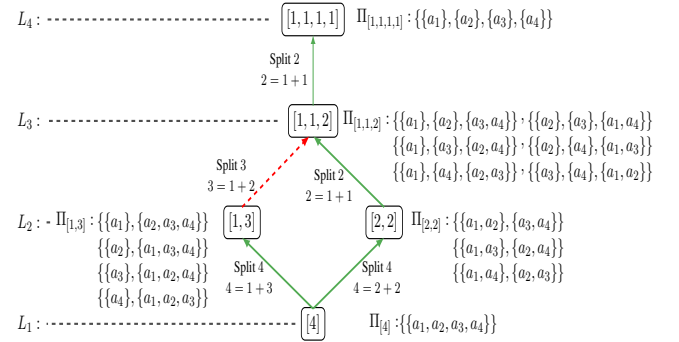

\section{SMART: A Novel CSG Algorithm}

The SMART algorithm is based on three techniques (CDP, GRAD and DIPS) that combine dynamic programming with integer partition graph search. 
SMART introduces new ways of searching the integer partition graph of solutions. 

\subsection{Complementarity-Based Dynamic Programming (CDP)}

CDP is an algorithm that determines the optimal coalition structure. To compute the optimal structure, CDP evaluates different sets of coalitions through two processes (Figure \ref{CDP}), and computes the best partition of each coalition, meaning the best way to split it into potentially multiple subcoalitions. 
The highest valued coalition structure returned by these processes is an optimal solution. 
To determine the coalitions to evaluate (that is, for which to compute the best partitions) 
and 
ensure that the optimal solution is found, the CDP algorithm uses an offline phase of preprocessing. This phase defines the best pair of coalition size sets to evaluate, such that when combined, the entire solution space is searched. This means that by evaluating these specific sets of coalitions, the CDP algorithm can guarantee that it has considered every possible grouping of agents. 

\subsubsection{CDP's Offline Phase}

The offline phase is one of the key components of the CDP algorithm, which is responsible for determining the coalitions to evaluate in the two processes of the algorithm. This is done by considering the coalition sizes, as illustrated on the integer partition graph (Figure \ref{IPG}). 
To understand how this works, let us consider an example of four agents. Dividing a coalition of size 2 into two coalitions of size 1, when searching for the solutions, corresponds to an upward movement in the integer partition graph from the node $[2,2]$ to the node $[1,1,2]$ (2=1+1). By choosing to split in this graph a subset of integers starting from the bottom node, that is, by considering the edges that result from splitting a subset of sizes, a subset of nodes in the integer partition graph becomes reachable from the bottom node, which means that the nodes are connected to the bottom node through a series of edges.
Thus, to search a certain number of subspaces, several sets of coalition sizes could be considered, with a different run time for each set. For example, by splitting only the sizes 2 and 4 starting from the bottom node, that is, by deciding to evaluate all the coalitions of sizes 2 and 4 and not those of size 3, all the nodes in the integer partition graph are reachable from the bottom node through a series of edges that result from splitting the coalition sizes 2 and 4. Hence, the set of sizes $\{2,4\}$ generates $100\%$ of subspaces, as does the set $\{2,3,4\}$, but with a lower run time. 
%
Hence, some sets of sizes may be more beneficial than others. To find the best size set pair, we propose the \textit{Size Sets Definition (SSD)} algorithm. 
The SSD algorithm starts by estimating the time required to evaluate the coalitions of each size from 2 to $n$. This time corresponds to the evaluation time of all the different ways of splitting the coalitions of each size. The estimated time of a splitting is the computational cost associated with this hardware operation, which is a fixed value like any other operation, such as an addition or a subtraction. Then, SSD computes the best pair of coalition size sets that searches all the subspaces with minimum run time. 

Then, SSD goes through each possible pair of coalition size sets, and for each pair, it constructs two integer partition graphs by only dividing the integers that belong to each set of sizes (Figure \ref{CDP}). The edges that result from dividing the sizes of the set connect a number of nodes to the bottom node. 
For a particular pair of sets, in case the generated nodes of the second set of sizes cover all the missed nodes of the first set, meaning that all the solution subspaces are obtained by the first or second set, the SSD algorithm tests whether the pair minimizes the run time. If so, this pair becomes the best pair. The run time of a set of sizes is the sum of the evaluation times of the coalitions of the size set. Given this, the run time of a pair of size sets is the highest run time of the size sets that comprise it. 

Algorithm 1 shows how SSD computes the best pair of coalition size sets. The sets of sizes are represented in binary format, with numbers of $n-2$ bits that represent sizes between 2 and $n-1$. The coalitions of size 1 and $n$ are not evaluated because the size 1 coalitions are never split and the coalition of size $n$ is always split. For example with five agents, the sets of sizes are $\{2,3\}$ and $\{2,4\}$, are represented by the binary numbers $011_{2}$ and $101_{2}$, respectively.

\begin{algorithm}[h!]
\KwIn{A problem size $n$, the required time to evaluate the coalitions of each size.}
\KwOut{The best pair $\mathcal{BS}_{1}, \mathcal{BS}_{2}$ of coalition size sets to search the entire solution subspaces. $\mathcal{BS}_{1}$ and $\mathcal{BS}_{2}$ are in binary format.}
\DontPrintSemicolon
 $\mathcal{BS}_{1} \leftarrow 2^{n-2}-1$ $\triangleright$ Initially, the best sizes include all the coalition sizes, i.e. $\{2,..,n-1\}$\;
 $\mathcal{BS}_{2} \leftarrow 2^{n-2}-1$\;
 $x \leftarrow GeneratedSubspaces(2^{n-2}-1)$ $\triangleright$ $GeneratedSubspaces(2^{n-2}-1)$ returns the set of subspaces for a problem of $n$ agents\;
 $t_{1}^{*} \leftarrow SetTime(2^{n-2}-1)$ $\triangleright$ $SetTime$ returns the run time for a set of sizes. The best time for searching the subspaces is initialized to the run time of considering all the sizes\;
 $t_{2}^{*} \leftarrow SetTime(2^{n-2}-1)$\;
 \For($\triangleright$ $i$ corresponds to a set of coalition sizes represented in binary format){$i= 1$ to $2^{n-2}-1$}{
 $y \leftarrow GeneratedSubspaces(i)$\;
 $v \leftarrow x \setminus y$ $\triangleright$ $v$ contains the missed nodes when considering the set $i$\;
 $t_{1} \leftarrow SetTime(i)$\;
 \If($\triangleright$ the set $i$ has a chance to improve the result){$t_{1}<t_{1}^{*}$ or $t_{1}<t_{2}^{*}$}{
 \For(){$j= i+1$ to $2^{n-2}-1$}{
 $y \leftarrow GeneratedSubspaces(j)$\;
 \If($\triangleright$ the set $j$ generates all the missed nodes of the set $i$){$v \subseteq z$}{
 $t_{2} \leftarrow SetTime(j)$\;
 \If($\triangleright$ the pair of sets $\{i,j\}$ improves the result){$t_{1}<t_{1}^{*}$ and $t_{2} \leq t_{2}^{*}$ or $t_{2}<t_{2}^{*}$ and $t_{1} \leq t_{1}^{*}$}{
 $\mathcal{BS}_{1} \leftarrow y$, $\mathcal{BS}_{2} \leftarrow z$\;
 $t_{1}^{*} \leftarrow t_{1}$, $t_{2}^{*} \leftarrow t_{2}$\;
 }
 }
 }
 }
 }
 Add $n$ to $\mathcal{BS}_{1}$ and $\mathcal{BS}_{2}$ $\triangleright$ $n$ is always considered.\;
 Return $\mathcal{BS}_{1}, \mathcal{BS}_{2}$\;
 \caption{Size Sets Definition (SSD) Algorithm} \label{SSD}
\end{algorithm}

These steps are all executed offline, meaning that we run the SSD algorithm only once  for each problem size $n$ (not once for each problem instance) to set up CDP. For example with ten agents, the best pair of coalition size sets that SSD returns is $\mathcal{BS}_{1} = \{2,4,6,10\}$ and $\mathcal{BS}_{2} = \{2,8,10\}$, which together search all the subspaces. 

\subsubsection{CDP's Online Phase}

The CDP algorithm uses these sets to compute the optimal coalition structure each time a problem instance is to be solved. CDP starts, in a first step, by constructing two tables, the partition table $P_{t}$ that stores the optimal partition of each coalition $\mathcal{C}$ in $P_{t}(\mathcal{C})$ and the value table $V_{t}$ that stores the optimal value of each coalition $\mathcal{C}$ in $V_{t}(\mathcal{C})$. $P_{t}(\mathcal{C})$ and $V_{t}(\mathcal{C})$ are computed for each coalition $\mathcal{C}$ by evaluating all possible ways of splitting $\mathcal{C}$ into two coalitions and checking whether it is beneficial to split it or not. For example, for a coalition of size 4, we evaluate its splitting into a coalition of size 1 and a coalition of size 3 (4=1+3) and into two coalitions of size 2 (4=2+2). This evaluation is done by the two CDP processes, which each consider the coalitions whose sizes belong to the sets returned by SSD. In each process, CDP starts evaluating the smallest coalitions first, as the result of this evaluation is used for evaluating larger coalitions (see Algorithms \ref{CDP1} and \ref{CDP2}). In a second step, each process of CDP computes the best coalition structure, among the searched subspaces, by computing the best partition of the grand coalition $A$. 
Hence, the optimal solution that CDP finds is the highest-valued coalition structure produced by these processes (Figure \ref{CDP}). 

Theorem 1 establishes that when considering any pair of coalition size sets, the presence of a path between each node and the bottom node in one of the two integer partition graphs associated with the respective size sets guarantees finding an optimal coalition structure.

\begin{thm}
When considering any pair of coalition size sets to evaluate, 
if 
there is a path between each node 
and the bottom node of one of the two integer partition graphs generated by the two size sets, CDP will fully search the solution subspaces. Thus it finds an optimal coalition structure.
\end{thm}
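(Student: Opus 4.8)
The plan is to prove that each of the two CDP processes, run on a size set $S$, computes exactly the value of the best coalition structure drawn from the union of the solution subspaces (nodes) reachable from the bottom node $[n]$ in the integer partition graph generated by $S$, and reconstructs such a structure. Granting this, since the hypothesis says that for the pair $(S_1,S_2)$ every node is reachable from $[n]$ in at least one of the two generated graphs, the better of the two process outputs is optimal over all of $\Pi(\mathcal{A})$.

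First I would fix notation for a single CDP process on a size set $S$ (which may be assumed to contain $n$, as SSD appends it, though the argument does not rely on this). The process computes, bottom-up over coalition size, a table value $g_S(C) = v(C)$ when $|C| = 1$ or $|C| \notin S$, and $g_S(C) = \max\bigl(v(C),\ \max_{C = C_1 \sqcup C_2} g_S(C_1) + g_S(C_2)\bigr)$ when $|C| \ge 2$ and $|C| \in S$, recording a witnessing split in $P_t$; it returns $g_S(\mathcal{A})$ together with the coalition structure reconstructed from $P_t$. A routine induction on $|C|$ then shows $g_S(C) = \max\{\,V(\mathcal{CS}) : \mathcal{CS} \in \mathcal{B}_S(C)\,\}$, where $\mathcal{B}_S(C)$ is the set of partitions of $C$ reachable by starting from $\{C\}$ and repeatedly replacing some member whose size belongs to $S$ by two of its parts; equivalently, $\mathcal{B}_S(C)$ consists of the partitions of $C$ that admit a binary decomposition tree whose internal nodes all have size in $S$.

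The crux is to identify $\mathcal{B}_S(\mathcal{A})$ with graph reachability. Write $\mathrm{sp}(\mathcal{CS})$ for the integer partition of $n$ given by the multiset of coalition sizes of $\mathcal{CS}$, and let $R_S$ be the set of nodes reachable from $[n]$ in the graph generated by $S$. If $\mathcal{CS} \in \mathcal{B}_S(\mathcal{A})$, projecting its decomposition tree onto coalition sizes gives a tree rooted at $n$, with leaf multiset $\mathrm{sp}(\mathcal{CS})$ and every internal label in $S$; listing the splits performed at its internal nodes (root first) exhibits a path from $[n]$ to the node $\mathrm{sp}(\mathcal{CS})$, so $\mathrm{sp}(\mathcal{CS}) \in R_S$. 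Conversely, a path from $[n]$ to a node $\mathcal{Q}$ assembles its successive edge-splits into a rooted binary tree of integers with root $n$, internal labels in $S$, and leaf multiset $\mathcal{Q}$; for any $\mathcal{CS} \in \Pi_{\mathcal{Q}}$, matching the coalitions of $\mathcal{CS}$ bijectively to the leaves of this tree by size and labelling each internal node with the union of the coalitions below it realizes $\mathcal{CS}$ as a member of $\mathcal{B}_S(\mathcal{A})$. Hence $\mathcal{B}_S(\mathcal{A}) = \bigcup_{\mathcal{Q} \in R_S} \Pi_{\mathcal{Q}}$, and combining with the previous step, $g_S(\mathcal{A}) = \max_{\mathcal{Q} \in R_S} \max_{\mathcal{CS} \in \Pi_{\mathcal{Q}}} V(\mathcal{CS})$.

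Finally I would conclude. CDP outputs $\max\bigl(g_{S_1}(\mathcal{A}), g_{S_2}(\mathcal{A})\bigr) = \max_{\mathcal{Q} \in R_{S_1} \cup R_{S_2}} \max_{\mathcal{CS} \in \Pi_{\mathcal{Q}}} V(\mathcal{CS})$, with the corresponding reconstructed structure. By the theorem's hypothesis $R_{S_1} \cup R_{S_2}$ is the full node set, i.e. all integer partitions of $n$; since $\Pi(\mathcal{A})$ is the disjoint union of the sets $\Pi_{\mathcal{Q}}$ over these nodes, this quantity equals $\max_{\mathcal{CS} \in \Pi(\mathcal{A})} V(\mathcal{CS}) = V(\mathcal{CS}^*)$. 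So CDP visits every subspace and returns an optimal coalition structure. I expect the decomposition-tree correspondence of the third paragraph — the exact match between what the restricted bottom-up DP explores and what is reachable in the generated graph — to be the main obstacle; the DP recursion of the second paragraph and the bookkeeping of the last are straightforward.
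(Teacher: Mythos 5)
Your argument is correct and follows essentially the same route as the paper's own (much terser) proof: edges of the generated graph correspond to evaluating all coalitions of a given size, so reachability of a node from the bottom node means every split needed for that subspace has been performed, and the hypothesis that every node is reachable in at least one of the two graphs yields optimality of the better of the two process outputs. Your version simply makes rigorous, via the DP induction and the decomposition-tree/path correspondence, what the paper states informally.
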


\begin{proof}
The splitting operations of CDP are represented with edges in the integer partition graph (Figure \ref{CDP}). An edge that connects two adjacent nodes, and results from splitting an integer $x$ into two, represents the evaluation of all coalitions of size $x$ by CDP. If there is a path between the bottom node of the integer partition graph and a node $\mathcal{N}$, then all the coalitions that need to be split to find the best solution in $\mathcal{N}$ are evaluated. 
As all the nodes are at least connected to one of the bottom nodes of the two integer partition graphs generated by the pair of size sets, CDP fully searches each 
subspace.
\end{proof}

Algorithms \ref{CDP1} and \ref{CDP2} detail the pseudocode of CDP. CDP runs in parallel on the two coalition size sets obtained from the offline phase. In Algorithm \ref{CDP2}, CDP computes the optimal coalition structure that belongs to the subspaces searched considering each set. Then, the optimal solution is the highest valued 
solution of the two (see lines 3-7 of Algorithm \ref{CDP1}).

\begin{algorithm}[t]
\KwIn{Set of all possible coalitions and the value $V_{t}(\mathcal{C})$ of each coalition $\mathcal{C}$. A number of agents $n$. Sets of coalition sizes $\mathcal{BS}_{1}$ and $\mathcal{BS}_{2}$ to consider by CDP.}
\KwOut{An optimal coalition structure $\mathcal{CS}^{*}$ and its value.}
\DontPrintSemicolon
 \nonl $\triangleright$ Begin parallel\;
 \nonl $\triangleright$ CDP runs in parallel given $\mathcal{BS}_{1}$\;
 $\mathcal{CS}^{*}_{1}$, $\mathcal{V}^{*}_{1}$ $\leftarrow$ $Computing\_Optimal\_CS(\mathcal{V}_{t},n,\mathcal{BS}_{1})$\;
 \nonl $\triangleright$ CDP runs in parallel given $\mathcal{BS}_{2}$\;
 $\mathcal{CS}^{*}_{2}$, $\mathcal{V}^{*}_{2}$ $\leftarrow$ $Computing\_Optimal\_CS(\mathcal{V}_{t},n,\mathcal{BS}_{2})$\;
\nonl $\triangleright$ End parallel\;
\uIf{$\mathcal{V}^{*}_{1} > \mathcal{V}^{*}_{2}$}{
$\mathcal{V}^{*} \leftarrow \mathcal{V}^{*}_{1}$, $\mathcal{CS}^{*} \leftarrow \mathcal{CS}^{*}_{1}$\;
}
\Else{
$\mathcal{V}^{*} \leftarrow \mathcal{V}^{*}_{2}$, $\mathcal{CS}^{*} \leftarrow \mathcal{CS}^{*}_{2}$\;
}

 Return $\mathcal{CS}^{*}$, $\mathcal{V}^{*}$\;
 \caption{The CDP algorithm} \label{CDP1}
\end{algorithm}

\begin{algorithm}[b]
\KwIn{The value table $\mathcal{V}_{t}$. A number of agents $n$. Set of coalition sizes $\mathcal{BS}$ to consider by CDP.}
\KwOut{An optimal coalition structure $CS^{*}$ and its value.}
\DontPrintSemicolon
 \For{$s \in \mathcal{BS}$}{
\ForEach{$C \subseteq A$, where $|C| = s$}{
\ForEach{$C_{1}, C_{2} \subseteq C$, where $C_{1} \cup C_{2} = C$ and $C_{1} \cap C_{2} = \emptyset$}{
\If{$V_{t}(C_{1}) + V_{t}(C_{2}) > V_{t}(C)$}{
$V_{t}(C) \leftarrow V_{t}(C_{1}) + V_{t}(C_{2})$\;
$P_{t}(C) \leftarrow \{C_{1},C_{2}\}$\;
}
}
}
}
$\mathcal{CS}^{*} \leftarrow $Partition($A, P_{t}$), $\mathcal{V}^{*} \leftarrow V_{t}(A)$ $\triangleright$ the pseudocode of Partition is in the appendix\;
Return $\mathcal{CS}^{*}$, $\mathcal{V}^{*}$\;
 \caption{Computing Optimal $\mathcal{CS}$} \label{CDP2}
\end{algorithm}

\begin{figure*}[h!]
\begin{center}
\resizebox{0.44\textwidth}{12.3cm}{%
     \begin{subfigure}{0.99\textwidth}
         \centering
         \begin{tikzpicture}


\node[draw,rectangle,rounded corners=4pt] (a)at(7.5,0){$1,1,1,1,1,1,1,1,1,1$};

\node[draw,rectangle,rounded corners=4pt] (a)at(7.5,-1.25){$1,1,1,1,1,1,1,1,2$};

\node[draw,rectangle,rounded corners=4pt] (a)at(5.5,-2.5){$1,1,1,1,1,1,1,3$};

\node[draw,rectangle,rounded corners=4pt] (a)at(9.5,-2.5){$1,1,1,1,1,1,2,2$};

\node[draw,rectangle,rounded corners=4pt] (a)at(4.5,-3.75){$1,1,1,1,1,1,4$};

\node[draw,rectangle,rounded corners=4pt] (a)at(7.5,-3.75){$1,1,1,1,1,2,3$};

\node[draw,rectangle,rounded corners=4pt] (a)at(10.5,-3.75){$1,1,1,1,2,2,2$};

\node[draw,rectangle,rounded corners=4pt] (a)at(2,-5){$1,1,1,1,1,5$};

\node[draw,rectangle,rounded corners=4pt] (a)at(4.75,-5){$1,1,1,1,2,4$};

\node[draw,rectangle,rounded corners=4pt] (a)at(7.5,-5){$1,1,1,1,3,3$};

\node[draw,rectangle,rounded corners=4pt] (a)at(10.25,-5){$1,1,1,2,2,3$};

\node[draw,rectangle,rounded corners=4pt] (a)at(13,-5){$1,1,2,2,2,2$};

\node[draw,rectangle,rounded corners=4pt] (a)at(0.75,-6.25){$1,1,1,1,6$};

\node[draw,rectangle,rounded corners=4pt] (a)at(3,-6.25){$1,1,1,2,5$};

\node[draw,rectangle,rounded corners=4pt] (a)at(5.25,-6.25){$1,1,1,3,4$};

\node[draw,rectangle,rounded corners=4pt] (a)at(7.5,-6.25){$1,1,2,2,4$};

\node[draw,rectangle,rounded corners=4pt] (a)at(9.75,-6.25){$1,1,2,3,3$};

\node[draw,rectangle,rounded corners=4pt] (a)at(12,-6.25){$1,2,2,2,3$};

\node[draw,rectangle,rounded corners=4pt] (a)at(14.25,-6.25){$2,2,2,2,2$};

\node[draw,rectangle,rounded corners=4pt, fill=red!85] (a)at(-0.5,-7.5){$1,1,1,7$};

\node[draw,rectangle,rounded corners=4pt] (a)at(1.5,-7.5){$1,1,2,6$};

\node[draw,rectangle,rounded corners=4pt] (a)at(3.5,-7.5){$1,1,3,5$};

\node[draw,rectangle,rounded corners=4pt] (a)at(5.5,-7.5){$1,2,2,5$};

\node[draw,rectangle,rounded corners=4pt] (a)at(7.5,-7.5){$1,1,4,4$};

\node[draw,rectangle,rounded corners=4pt] (a)at(9.5,-7.5){$1,2,3,4$};

\node[draw,rectangle,rounded corners=4pt] (a)at(11.5,-7.5){$2,2,2,4$};

\node[draw,rectangle,rounded corners=4pt] (a)at(13.5,-7.5){$1,3,3,3$};

\node[draw,rectangle,rounded corners=4pt] (a)at(15.5,-7.5){$2,2,3,3$};

\node[draw,rectangle,rounded corners=4pt] (a)at(0.5,-8.75){$1,1,8$};

\node[draw,rectangle,rounded corners=4pt, fill=red!85] (a)at(2.5,-8.75){$1,2,7$};

\node[draw,rectangle,rounded corners=4pt] (a)at(4.5,-8.75){$1,3,6$};

\node[draw,rectangle,rounded corners=4pt] (a)at(6.5,-8.75){$2,2,6$};

\node[draw,rectangle,rounded corners=4pt] (a)at(8.5,-8.75){$1,4,5$};

\node[draw,rectangle,rounded corners=4pt, fill=red!85] (a)at(10.5,-8.75){$2,3,5$};

\node[draw,rectangle,rounded corners=4pt] (a)at(12.5,-8.75){$2,4,4$};

\node[draw,rectangle,rounded corners=4pt] (a)at(14.5,-8.75){$3,3,4$};

\node[draw,rectangle,rounded corners=4pt] (a)at(2.5,-10){$1,9$};

\node[draw,rectangle,rounded corners=4pt] (a)at(5,-10){$2,8$};

\node[draw,rectangle,rounded corners=4pt] (a)at(7.5,-10){$3,7$};

\node[draw,rectangle,rounded corners=4pt] (a)at(10,-10){$4,6$};

\node[draw,rectangle,rounded corners=4pt] (a)at(12.5,-10){$5,5$};

\node[draw,rectangle,rounded corners=4pt] (a)at(7.5,-11.25){$10$};

\node[fill=black!25] () (a)at(7.5,-12.4){\Large (a)};

\draw[->,>=latex,blue] (7.5,-10.95) -- (7.5,-10.3);

\draw[->,>=latex,blue] (7.5,-10.95) -- (5,-10.3);

\draw[->,>=latex,blue] (7.5,-10.95) -- (2.5,-10.3);

\draw[->,>=latex,blue] (7.5,-10.95) -- (10,-10.3);

\draw[->,>=latex,blue] (7.5,-10.95) -- (12.5,-10.3);


\draw[->,>=latex,blue] (5,-9.7) -- (0.5,-9.05);

\draw[->,>=latex,blue] (10,-9.7) -- (4.5,-9.05);

\draw[->,>=latex,blue] (10,-9.7) -- (8.5,-9.05);
\draw[->,>=latex,blue] (10,-9.7) -- (12.5,-9.05);
\draw[->,>=latex,blue] (10,-9.7) -- (14.5,-9.05);

\draw[->,>=latex,blue] (10,-9.7) -- (6.5,-9.05);


\draw[->,>=latex,blue] (4.5,-8.45) -- (3.5,-7.8);
\draw[->,>=latex,blue] (4.5,-8.45) -- (9.5,-7.8);
\draw[->,>=latex,blue] (4.5,-8.45) -- (13.5,-7.8);

\draw[->,>=latex,blue] (6.5,-8.45) -- (1.5,-7.8);

\draw[->,>=latex,blue] (6.5,-8.45) -- (5.5,-7.8);
\draw[->,>=latex,blue] (6.5,-8.45) -- (11.5,-7.8);
\draw[->,>=latex,blue] (6.5,-8.45) -- (15.5,-7.8);

\draw[->,>=latex,blue] (8.5,-8.45) -- (3.5,-7.8);

\draw[->,>=latex,blue] (8.5,-8.45) -- (5.5,-7.8);

\draw[->,>=latex,blue] (12.5,-8.45) -- (7.5,-7.8);
\draw[->,>=latex,blue] (12.5,-8.45) -- (9.5,-7.8);
\draw[->,>=latex,blue] (12.5,-8.45) -- (11.5,-7.8);

\draw[->,>=latex,blue] (14.5,-8.45) -- (13.5,-7.8);
\draw[->,>=latex,blue] (14.5,-8.45) -- (15.5,-7.8);


\draw[->,>=latex,blue] (1.5,-7.2) -- (0.75,-6.55);

\draw[->,>=latex,blue] (1.5,-7.2) -- (3,-6.55);
\draw[->,>=latex,blue] (1.5,-7.2) -- (7.5,-6.55);
\draw[->,>=latex,blue] (1.5,-7.2) -- (9.75,-6.55);

\draw[->,>=latex,blue] (5.5,-7.2) -- (3,-6.55);

\draw[->,>=latex,blue] (7.5,-7.2) -- (5.25,-6.55);

\draw[->,>=latex,blue] (7.5,-7.2) -- (7.5,-6.55);

\draw[->,>=latex,blue] (9.5,-7.2) -- (5.25,-6.55);

\draw[->,>=latex,blue] (9.5,-7.2) -- (9.75,-6.55);
\draw[->,>=latex,blue] (9.5,-7.2) -- (12,-6.55);

\draw[->,>=latex,blue] (15.5,-7.2) -- (9.75,-6.55);

\draw[->,>=latex,blue] (11.5,-7.2) -- (7.5,-6.55);
\draw[->,>=latex,blue] (11.5,-7.2) -- (12,-6.55);
\draw[->,>=latex,blue] (11.5,-7.2) -- (14.25,-6.55);


\draw[->,>=latex,blue] (0.75,-5.95) -- (2,-5.3);
\draw[->,>=latex,blue] (0.75,-5.95) -- (4.75,-5.3);
\draw[->,>=latex,blue] (0.75,-5.95) -- (7.5,-5.3);

\draw[->,>=latex,blue] (3,-5.95) -- (2,-5.3);

\draw[->,>=latex,blue] (5.25,-5.95) -- (7.5,-5.3);

\draw[->,>=latex,blue] (5.25,-5.95) -- (10.25,-5.3);

\draw[->,>=latex,blue] (7.5,-5.95) -- (4.75,-5.3);

\draw[->,>=latex,blue] (7.5,-5.95) -- (10.25,-5.3);

\draw[->,>=latex,blue] (7.5,-5.95) -- (13,-5.3);

\draw[->,>=latex,blue] (9.75,-5.95) -- (7.5,-5.3);

\draw[->,>=latex,blue] (12,-5.95) -- (10.25,-5.3);

\draw[->,>=latex,blue] (14.25,-5.95) -- (13,-5.3);


\draw[->,>=latex,blue] (4.75,-4.7) -- (4.5,-4.05);

\draw[->,>=latex,blue] (4.75,-4.7) -- (7.5,-4.05);

\draw[->,>=latex,blue] (4.75,-4.7) -- (10.5,-4.05);

\draw[->,>=latex,blue] (10.25,-4.7) -- (7.5,-4.05);

\draw[->,>=latex,blue] (13,-4.7) -- (10.5,-4.05);


\draw[->,>=latex,blue] (4.5,-3.45) -- (5.5,-2.8);

\draw[->,>=latex,blue] (4.5,-3.45) -- (9.5,-2.8);

\draw[->,>=latex,blue] (7.5,-3.45) -- (5.5,-2.8);

\draw[->,>=latex,blue] (10.5,-3.45) -- (9.5,-2.8);


\draw[->,>=latex,blue] (9.5,-2.2) -- (7.5,-1.55);

\draw[->,>=latex,blue] (7.5,-0.95) -- (7.5,-0.3);

\end{tikzpicture}\\
         \label{fig:three sin x}
     \end{subfigure}
}
\:\:\:\:\:\:\:\:\:\:\:\:
\resizebox{0.44\textwidth}{12.3cm}{%
     \begin{subfigure}{0.99\textwidth}
         \centering
         \begin{tikzpicture}


\node[draw,rectangle,rounded corners=4pt, fill=red!85] (a)at(7.5,0){$1,1,1,1,1,1,1,1,1,1$};

\node[draw,rectangle,rounded corners=4pt, fill=red!85] (a)at(7.5,-1.25){$1,1,1,1,1,1,1,1,2$};

\node[draw,rectangle,rounded corners=4pt, fill=red!85] (a)at(5.5,-2.5){$1,1,1,1,1,1,1,3$};

\node[draw,rectangle,rounded corners=4pt, fill=red!85] (a)at(9.5,-2.5){$1,1,1,1,1,1,2,2$};

\node[draw,rectangle,rounded corners=4pt, fill=red!85] (a)at(4.5,-3.75){$1,1,1,1,1,1,4$};

\node[draw,rectangle,rounded corners=4pt, fill=red!85] (a)at(7.5,-3.75){$1,1,1,1,1,2,3$};

\node[draw,rectangle,rounded corners=4pt, fill=red!85] (a)at(10.5,-3.75){$1,1,1,1,2,2,2$};

\node[draw,rectangle,rounded corners=4pt, fill=red!85] (a)at(2,-5){$1,1,1,1,1,5$};

\node[draw,rectangle,rounded corners=4pt, fill=red!85] (a)at(4.75,-5){$1,1,1,1,2,4$};

\node[draw,rectangle,rounded corners=4pt, fill=red!85] (a)at(7.5,-5){$1,1,1,1,3,3$};

\node[draw,rectangle,rounded corners=4pt, fill=red!85] (a)at(10.25,-5){$1,1,1,2,2,3$};

\node[draw,rectangle,rounded corners=4pt, fill=red!85] (a)at(13,-5){$1,1,2,2,2,2$};

\node[draw,rectangle,rounded corners=4pt] (a)at(0.75,-6.25){$1,1,1,1,6$};

\node[draw,rectangle,rounded corners=4pt, fill=red!85] (a)at(3,-6.25){$1,1,1,2,5$};

\node[draw,rectangle,rounded corners=4pt, fill=red!85] (a)at(5.25,-6.25){$1,1,1,3,4$};

\node[draw,rectangle,rounded corners=4pt, fill=red!85] (a)at(7.5,-6.25){$1,1,2,2,4$};

\node[draw,rectangle,rounded corners=4pt, fill=red!85] (a)at(9.75,-6.25){$1,1,2,3,3$};

\node[draw,rectangle,rounded corners=4pt, fill=red!85] (a)at(12,-6.25){$1,2,2,2,3$};

\node[draw,rectangle,rounded corners=4pt, fill=red!85] (a)at(14.25,-6.25){$2,2,2,2,2$};

\node[draw,rectangle,rounded corners=4pt] (a)at(-0.5,-7.5){$1,1,1,7$};

\node[draw,rectangle,rounded corners=4pt] (a)at(1.5,-7.5){$1,1,2,6$};

\node[draw,rectangle,rounded corners=4pt] (a)at(3.5,-7.5){$1,1,3,5$};

\node[draw,rectangle,rounded corners=4pt, fill=red!85] (a)at(5.5,-7.5){$1,2,2,5$};

\node[draw,rectangle,rounded corners=4pt] (a)at(7.5,-7.5){$1,1,4,4$};

\node[draw,rectangle,rounded corners=4pt, fill=red!85] (a)at(9.5,-7.5){$1,2,3,4$};

\node[draw,rectangle,rounded corners=4pt, fill=red!85] (a)at(11.5,-7.5){$2,2,2,4$};

\node[draw,rectangle,rounded corners=4pt, fill=red!85] (a)at(13.5,-7.5){$1,3,3,3$};

\node[draw,rectangle,rounded corners=4pt, fill=red!85] (a)at(15.5,-7.5){$2,2,3,3$};

\node[draw,rectangle,rounded corners=4pt] (a)at(0.5,-8.75){$1,1,8$};

\node[draw,rectangle,rounded corners=4pt] (a)at(2.5,-8.75){$1,2,7$};

\node[draw,rectangle,rounded corners=4pt, fill=red!85] (a)at(4.5,-8.75){$1,3,6$};

\node[draw,rectangle,rounded corners=4pt] (a)at(6.5,-8.75){$2,2,6$};

\node[draw,rectangle,rounded corners=4pt, fill=red!85] (a)at(8.5,-8.75){$1,4,5$};

\node[draw,rectangle,rounded corners=4pt] (a)at(10.5,-8.75){$2,3,5$};

\node[draw,rectangle,rounded corners=4pt] (a)at(12.5,-8.75){$2,4,4$};

\node[draw,rectangle,rounded corners=4pt, fill=red!85] (a)at(14.5,-8.75){$3,3,4$};

\node[draw,rectangle,rounded corners=4pt] (a)at(2.5,-10){$1,9$};

\node[draw,rectangle,rounded corners=4pt] (a)at(5,-10){$2,8$};

\node[draw,rectangle,rounded corners=4pt] (a)at(7.5,-10){$3,7$};

\node[draw,rectangle,rounded corners=4pt] (a)at(10,-10){$4,6$};

\node[draw,rectangle,rounded corners=4pt] (a)at(12.5,-10){$5,5$};

\node[draw,rectangle,rounded corners=4pt] (a)at(7.5,-11.25){$10$};

\node[fill=black!25] () (a)at(7.5,-12.4){\Large (b)};

\draw[->,>=latex,blue] (7.5,-10.95) -- (7.5,-10.3);

\draw[->,>=latex,blue] (7.5,-10.95) -- (5,-10.3);

\draw[->,>=latex,blue] (7.5,-10.95) -- (2.5,-10.3);

\draw[->,>=latex,blue] (7.5,-10.95) -- (10,-10.3);

\draw[->,>=latex,blue] (7.5,-10.95) -- (12.5,-10.3);


\draw[->,>=latex,blue] (5,-9.7) -- (0.5,-9.05);

\draw[->,>=latex,vert] (5,-9.7) -- (2.5,-9.05);
\draw[->,>=latex,vert] (5,-9.7) -- (6.5,-9.05);
\draw[->,>=latex,vert] (5,-9.7) -- (10.5,-9.05);
\draw[->,>=latex,vert] (5,-9.7) -- (12.5,-9.05);


\draw[->,>=latex,vert] (0.5,-8.45) -- (-0.5,-7.8);
\draw[->,>=latex,vert] (0.5,-8.45) -- (1.5,-7.8);
\draw[->,>=latex,vert] (0.5,-8.45) -- (3.5,-7.8);
\draw[->,>=latex,vert] (0.5,-8.45) -- (7.5,-7.8);

\draw[->,>=latex,blue] (2.5,-8.45) -- (-0.5,-7.8);

\draw[->,>=latex,blue] (6.5,-8.45) -- (1.5,-7.8);

\draw[->,>=latex,blue] (10.5,-8.45) -- (3.5,-7.8);

\draw[->,>=latex,blue] (12.5,-8.45) -- (7.5,-7.8);


\draw[->,>=latex,blue] (1.5,-7.2) -- (0.75,-6.55);


\end{tikzpicture}\\
         \label{fig:three sin x}
     \end{subfigure}
     }

             \caption{Illustration of CDP on a 10-agent integer partition graph. CDP evaluates the coalitions of size {\small$s$ $\in \mathcal{BS}_{1}$=$\{2,4,6,10\}$} (Figure \ref{CDP}.a) and those of size {\small$s$ $\in \mathcal{BS}_{2}$=$\{2,8,10\}$} (Figure \ref{CDP}.b) in parallel. With the set {\small$\mathcal{BS}_{1}$} (resp. {\small$\mathcal{BS}_{2}$}), CDP explores all the subspaces in Figure \ref{CDP}.a (resp. Figure \ref{CDP}.b), except the red ones. Nevertheless, CDP covers all the subspaces  using {\small$\mathcal{BS}_{1}$} and {\small$\mathcal{BS}_{2}$}. Hence, no node is missed by both sets, that is, no node is red in both figures.
             }
        \label{CDP}
\normalsize
\end{center}
\end{figure*}

\subsection{Gradual Search with Dynamic Programming (GRAD)}

The GRAD algorithm uses multiple parallel processes to search for the optimal solution, each with a set of coalition sizes as input with which it explores a certain percentage of the search space. These percentages that we detail in Section~6 are hyperparameters that can be adjusted to fine-tune the algorithm. 

\subsubsection{GRAD's Offline Phase}

GRAD also uses an offline phase to compute, for each considered percentage $\omega$, the best coalition size set that allows one to search this percentage of subspaces with the shortest run time. 
To find these sets, we introduce 
the \textit{Size Optimization for diFferent percenTages (SOFT)} algorithm. For each size set $\mathcal{S}$, SOFT constructs the corresponding integer partition graph $\mathcal{G}_{\mathcal{S}}$ by only dividing the integers that belong to the set $\mathcal{S}$. $\mathcal{G}_{\mathcal{S}}$ is thus partial, as shown in the example of Figure~\ref{CDP}. 
If this number of generated nodes in $\mathcal{G}_{\mathcal{S}}$ is at least an $\omega$ fraction of the total number of subspaces and $\mathcal{S}$ minimizes the run time, then $\mathcal{S}$ becomes the best set. 
Algorithm \ref{soft} shows how SOFT computes the best coalition size sets. 
For example with $n=10$ agents and $\omega = 90\%$, the best coalition size set that SOFT returns is $\{2,4,6,10\}$; it searches $92.86\%$ of subspaces (Figure \ref{CDP}.a). 

\begin{algorithm}[h!]
\KwIn{A CSG problem size $n$, the required times to evaluate the coalitions of sizes 2 to $n$-1 for splitting, and the percentage $\omega$ of solution subspaces.}
\KwOut{The best set of coalition sizes $\mathcal{BS}$ to search at least the percentage $\omega$ of the solution subspaces. $\mathcal{BS}$ is in binary format.}
\DontPrintSemicolon
 $\mathcal{BS} \leftarrow 2^{n-2}-1$ $\triangleright$ Initially, the best set includes all the coalition sizes, i.e. $\{2,..,n-1\}$. $2^{n-2}-1$ is the binary representation of this set.\;
 $x \leftarrow NumberOfSubspaces(2^{n-2}-1)$ $\triangleright$ The function$NumberOfSubspaces$ returns the number of subspaces for a CSG problem with $n$ agents\;
 $t^{*} \leftarrow SetTime(2^{n-2}-1)$ $\triangleright$ $SetTime$ returns the run time for a set of sizes. $t^{*}$ is initialized to the run time when considering all the coalition sizes, which is the worst time\;
 \For($\triangleright$ $i$ corresponds to a set of coalition sizes represented in binary format){$i= 1$ to $2^{n-2}-1$}{
 $y \leftarrow NumberOfSubspaces(i)$ $\triangleright$ $y$ represents the number of generated subspaces in $i$\;
 \If($\triangleright$ the set $i$ generates at least the needed percentage of subspaces){$y\geq x \times \omega$ \textbf{and} $SetTime(i) < t^{*}$}{
 $\mathcal{BS} \leftarrow i$\;
 $t^{*} \leftarrow SetTime(i)$\;
 }
 }
 Add $n$ to $\mathcal{BS}$ $\triangleright$ $n$ is always considered.\;
 Return $\mathcal{BS}$\;
 \caption{The SOFT Algorithm} \label{soft}
\end{algorithm}

\subsubsection{GRAD's Online Phase}

Once these size sets are computed by the SOFT algorithm, each process of GRAD is tuned with the corresponding size set. To solve the problems, GRAD builds a partial integer partition graph with all subspaces and no edges and launches each process with its size set and this partial graph as input (Algorithm \ref{grad}). Each process of GRAD evaluates all the coalitions whose sizes belong to the best size set obtained from the offline phase and computes their best partitions. The GRAD process evaluates all possible ways of splitting each coalition of the selected sizes into two coalitions and tests whether it is beneficial to split or not. The coalitions are evaluated starting with the smallest ones (Figure \ref{CDP}.a). The result of this evaluation is stored in the partition table $P_{t}$ and the value table $V_{t}$. Once all the coalitions have been evaluated, the GRAD process returns the best coalition structure among the searched subspaces. 
This is determined by computing the best partition of the grand coalition $A$ using the partition and value tables generated during the evaluation process.

When the optimal solution is in the subspaces explored by a process of GRAD that searches a specific percentage of subspaces $\omega$ $<$ $100\%$, the process 
finds it with the shortest run time and enables the other GRAD processes to instantly prune certain subspaces without exploring them. To prune the subspaces, we introduce the upper bound $\mathcal{UB}(\mathcal{N})$ of a subspace $\mathcal{N}$, which is the highest value a coalition structure of this subspace can possibly reach. $\mathcal{UB} (\mathcal{N})$ = $\sum_{i \in Integers(\mathcal{N})}$ $Max_{i}$, where $Max_{i}$ is the maximum value a coalition of size $i$ can take and $Integers(\mathcal{N})$ is the set of integers that form the corresponding integer partition of the subspace $\mathcal{N}$. For instance, for $\mathcal{N}$ = $[1,4,5]$, $Integers(\mathcal{N})$ = $\{1,4,5\}$ and $\mathcal{UB} (\mathcal{N})$ = $\sum_{i \in \{1,4,5\}} Max_{i}$ $=$ $Max_{1}$ $+$ $Max_{4}$ $+$ $Max_{5}$. 
By comparing the upper bounds of the subspaces, the GRAD processes identify those that have no chance of improving the solution quality and prune the corresponding nodes that do not have a better upper bound than the last best solution found (Line 10 in Algorithm \ref{gradp}). Moreover, after evaluating all the coalitions of size $x$, a GRAD process updates the integer partition graph by adding the edges that result from splitting $x$ into two integers (Line 8 in Algorithm \ref{gradp}). For example in Figure \ref{CDP}.b, after evaluating all the coalitions of size 8, all the green edges are added to the graph. Hence, a number of nodes become reachable from the bottom node through a series of edges, and the corresponding subspaces are fully searched. Thus, the GRAD process prunes them (Line 9 in Algorithm \ref{gradp}). 
The subspace pruning, using the upper bounds and the connection to the bottom node, are repeated each time a process of GRAD finishes evaluating the coalitions of each size. Hence, as the size $x$ increases, more subspaces are pruned from the graph. Thus, when all subspaces are pruned, GRAD finishes and returns the optimal solution.

\begin{algorithm}[h!]
\KwIn{Set of all possible coalitions and the value $V_{t}(\mathcal{C})$ and $P_{t}(\mathcal{C})$ of each coalition $\mathcal{C}$. A number of agents $n$. Sets of coalition sizes $\mathcal{BS}_{i}$ to consider by GRAD.}
\KwOut{An optimal coalition structure $\mathcal{CS}^{*}$ and its value.}
\DontPrintSemicolon
\For($\triangleright$ $|\mathcal{BS}|$ is the number of considered percentages.){$i = 1$ \textbf{to} $|\mathcal{BS}|$}{
 Generate a partial integer partition graph $\mathcal{I}_{\mathcal{G}}$ with all subspaces and with no edges\;
 \nonl $\triangleright$ Begin parallel\;
 \nonl $\triangleright$ GRAD process runs in parallel given $\mathcal{BS}_{i}$\;
 $\mathcal{CS}^{*}$, $\mathcal{V}^{*}$ $\leftarrow$ $Search\_Process(\mathcal{P}_{t},\mathcal{V}_{t},n,\mathcal{BS}_{i},\mathcal{I}_{\mathcal{G}})$\;
 \nonl $\triangleright$ End parallel\;
}
 Return $\mathcal{CS}^{*}$, $\mathcal{V}^{*}$\;
 \caption{The GRAD algorithm} \label{grad}
\end{algorithm}

\begin{algorithm}[h!]
\KwIn{The partition and value tables $\mathcal{P}_{t}$ and $\mathcal{V}_{t}$. A number of agents $n$. Set of coalition sizes $\mathcal{BS}$ to consider by the process. A partial integer partition graph.}
\KwOut{An optimal coalition structure $CS^{*}$ and its value.}
\DontPrintSemicolon
 \For{$s \in \mathcal{BS}$}{
\ForEach{$C \subseteq A$, where $|C| = s$}{
\ForEach{$C_{1}, C_{2} \subseteq C$, where $C_{1} \cup C_{2} = C$ and $C_{1} \cap C_{2} = \emptyset$}{
\If{$V_{t}(C_{1}) + V_{t}(C_{2}) > V_{t}(C)$}{
$V_{t}(C) \leftarrow V_{t}(C_{1}) + V_{t}(C_{2})$\;
$P_{t}(C) \leftarrow \{C_{1},C_{2}\}$\;
}
}
}
Compute the best solution, {\scriptsize$\mathcal{CS}^{*}$, $\mathcal{V}^{*}$}, from $P_{t}$ and $V_{t}$\;
Add to the integer partition graph the edges that result from splitting the size $s$\;
Prune the subspaces connected to the bottom node $\triangleright$ these subspaces are already explored\;
Prune the subspaces that do not have a better upper bound than the last best solution found\;
\If{all the subspaces are pruned}{
Return $\mathcal{CS}^{*}$, $\mathcal{V}^{*}$\;
}
}
 \caption{Search Process} \label{gradp}
\end{algorithm}

\subsection{Distributed Integer Partition Graph Search (DIPS)}

DIPS searches the solution subspaces using the integer partition graph. First, DIPS computes the upper bounds of the subspaces and searches them based on their upper bounds. Then, whenever a CDP or GRAD process finishes, while there are still unexplored nodes, DIPS 
uses that process for a different problem space to parallelize its search. The new process uses the same search technique in DIPS. 
Thus, the subspaces of solutions will gradually be distributed between several processes as they are released by CDP or GRAD, which share the subspaces, their upper bounds, and their sorting. This way, each subspace is searched by only one process. 

DIPS progressively prunes the subspaces that do not have a better upper bound than the last best solution found. To search a subspace of solutions, a DIPS process constructs several search trees to explore the coalition structures. The nodes of these trees represent coalitions and each path from the root to a leaf represents a coalition structure. Moreover, DIPS applies a branch-and-bound technique to identify and avoid branches that have no chance of containing an optimal solution. An example of this step is given in Figure 3 in the appendix.

\section{Hybridization: The SMART Algorithm}

We combine CDP, GRAD, and DIPS to make the \textit{coalition-\textbf{S}ize opti\textbf{M}ization and subsp\textbf{A}ce \textbf{R}econfigura\textbf{T}ion (SMART)} algorithm. Initially, SMART sorts the subspaces by their upper bounds. The DIPS algorithm starts searching with the subspace that has the highest upper bound. Then,  DIPS prunes out the subspaces that are either already searched by CDP or GRAD, or that do not have a better upper bound than the last best solution found. CDP and GRAD evaluate the coalitions of the computed sizes obtained from their respective offline phases and allow subspace pruning through intermediate solutions. Whenever a process in CDP or GRAD finishes evaluating the coalitions of any size, they prune out the subspaces that are connected to the bottom node of the integer partition graph through a series of edges because the optimal coalition structure among these subspaces is found by CDP or GRAD. Hence, DIPS does not need to search them. Figure \ref{dips2} shows how DIPS distributes the search.

\begin{figure}[h!]
\centering
\resizebox{1.0\columnwidth}{2.6cm}{%
\begin{tikzpicture}
\Large

\node[draw,rectangle,rounded corners=5pt, minimum height=1cm, minimum width=1.7cm, text width=1.6cm, text centered] (a)at(5.0,2.5){DIPS process};

\draw[->,>=latex,red,dashed, thick, line width=1pt] (5.0,1.975) -- (1.5,0.025);

\draw[->,>=latex,red,dashed, thick, line width=1pt] (5.0,1.975) -- (4.0,0.025);

\draw[->,>=latex,red,dashed, thick, line width=1pt] (5.0,1.975) -- (6.5,0.025);

\draw[->,>=latex,red,dashed, thick, line width=1pt] (5.0,1.975) -- (11.5,0.025);

\draw[->,>=latex,red,dashed, thick, line width=1pt] (5.0,1.975) -- (19,0.025);

\node[draw,rectangle,rounded corners=5pt, minimum height=1cm, minimum width=1.9cm, text width=1.9cm, text centered] (a)at(10.0,2.5){GRAD process 1};

\draw[->,>=latex,linecolor,dashed, thick, line width=1pt] (10.0,1.975) -- (9.0,0.025);

\draw[->,>=latex,linecolor,dashed, thick, line width=1pt] (10.0,1.975) -- (16.5,0.025);

\node[draw,rectangle,rounded corners=5pt, minimum height=1cm, minimum width=1.9cm, text width=1.9cm, text centered] (a)at(15.0,2.5){GRAD process 2};

\draw[->,>=latex,vert,dashed, thick, line width=1pt] (15.0,1.975) -- (14,0.025);

\node[draw,rectangle,rounded corners=5pt, minimum height=1cm, minimum width=1.7cm, text width=1.6cm, text centered] (a)at(1.5,-0.5){2,4,4};

\node[draw,rectangle,rounded corners=5pt, minimum height=1cm, minimum width=1.7cm, text width=1.6cm, text centered] (a)at(4.0,-0.5){1,2,2,6};

\node[draw,rectangle,rounded corners=5pt, minimum height=1cm, minimum width=1.7cm, text width=1.6cm, text centered] (a)at(6.5,-0.5){3,3,4};

\node[draw,rectangle,rounded corners=5pt, minimum height=1cm, minimum width=1.7cm, text width=1.6cm, text centered] (a)at(9.0,-0.5){1,1,1,3,4};

\node[draw,rectangle,rounded corners=5pt, minimum height=1cm, minimum width=1.7cm, text width=1.6cm, text centered] (a)at(11.5,-0.5){1,3,6};

\node[draw,rectangle,rounded corners=5pt, minimum height=1cm, minimum width=1.7cm, text width=1.6cm, text centered] (a)at(14.0,-0.5){2,2,3,3};

\node[draw,rectangle,rounded corners=5pt, minimum height=1cm, minimum width=1.7cm, text width=1.6cm, text centered] (a)at(16.5,-0.5){1,2,2,5};
 
\node[draw,rectangle,rounded corners=5pt, minimum height=1cm, minimum width=1.7cm, text width=1.6cm, text centered] (a)at(19.0,-0.5){1,4,5};

\end{tikzpicture}%
}
\caption{Illustration of the subspace distribution technique. The subspaces are sorted according to their upper bounds. The subspace [2,4,4] is the highest upper bound node and [1,4,5] is the lowest upper bound node. First, DIPS starts by searching the highest upper bound subspaces. Then, each time a GRAD or  CDP process is released, DIPS uses the process to expand the parallelism of its search. For example in this Figure, when the released ``GRAD process 1'' is used by DIPS to search the node [1,1,1,3,4], the ``DIPS process'' searches another node. 
}\label{dips2}
\end{figure}
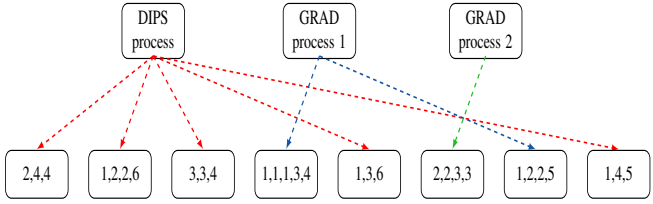

Algorithm~\ref{smartcode} shows the pseudocode of SMART. We now introduce the following results.

\begin{lm}
CDP is faster than or at least as fast as IDP.
\end{lm}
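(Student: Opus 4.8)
The plan is to exhibit IDP as, in effect, a configuration available to CDP's offline optimizer, so that CDP --- which picks its configuration to minimize run time over a strictly larger family --- cannot be slower. Write $\mathcal{S}=\{2,\dots,\lfloor 2n/3\rfloor\}\cup\{n\}$ for the set of coalition sizes whose coalitions IDP evaluates. IDP does exactly this: it evaluates every $2$-way split of every coalition of a size in $\mathcal{S}$ (filling the tables $V_t,P_t$), then reconstructs the optimum by following $P_t$ from the grand coalition; so its running time is $\mathrm{SetTime}(\mathcal{S})$ plus an $O(n)$ reconstruction term plus the $O(2^n)$ cost of reading the characteristic function, which CDP pays as well. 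The first step is to recall, from the correctness of IDP~\cite{rahwan2008improved} restated in the integer-partition-graph language of Section~2, that splitting only the sizes in $\mathcal{S}$ already connects every node to the bottom node; i.e. the single set $\mathcal{S}$ generates $100\%$ of the subspaces.

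The second and central step is to show that SSD (Algorithm~\ref{SSD}) returns a pair $(\mathcal{BS}_1,\mathcal{BS}_2)$ with $\max\bigl(\mathrm{SetTime}(\mathcal{BS}_1),\mathrm{SetTime}(\mathcal{BS}_2)\bigr)\le \mathrm{SetTime}(\mathcal{S})$. Two elementary facts drive this: $\mathrm{SetTime}$, being a sum of nonnegative per-size evaluation times, is monotone under set inclusion; and the binary code of a set is monotone under inclusion, so the singleton $\{2\}$ has the smallest code and $\mathrm{SetTime}(\{2\})\le\mathrm{SetTime}(\mathcal{S})$. A short trace of SSD's first outer iteration (the set $\{2\}$) then shows that the best pair recorded at that point already has pair-run-time $\le\mathrm{SetTime}(\mathcal{S})$: the pruning test passes because $\mathrm{SetTime}(\{2\})$ is strictly below the initial best $\mathrm{SetTime}(\{2,\dots,n-1\})$, and the inner loop registers, among all admissible partners $j$ of $\{2\}$, one of minimum $\mathrm{SetTime}$ --- and $\mathcal{S}$ itself is admissible (it alone covers everything), or for the corner cases $n\le 4$ a cheapest singleton is. Every later SSD update is a Pareto improvement of $(t_1^*,t_2^*)$, which can only leave $\max(t_1^*,t_2^*)$ unchanged or smaller, so the returned pair inherits the bound. (For $n\le 3$ the loop body is skipped and SSD returns $\mathcal{S}$ itself.)

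Combining: by Theorem~1, CDP with $(\mathcal{BS}_1,\mathcal{BS}_2)$ fully searches the subspaces and returns an optimal coalition structure; its online cost is the $O(2^n)$ table setup, the two evaluation processes run \emph{in parallel} (wall-clock $\max(\mathrm{SetTime}(\mathcal{BS}_1),\mathrm{SetTime}(\mathcal{BS}_2))$), and the $O(n)$ reconstruction --- hence at most $O(2^n)+\mathrm{SetTime}(\mathcal{S})+O(n)$, which is IDP's running time. The offline SSD pass is run once per problem size $n$ (not per instance), so it is not part of the per-instance comparison. This proves CDP is at least as fast as IDP; moreover it is typically strictly faster --- e.g. for $n=10$ the returned pair $\{2,4,6,10\},\{2,8,10\}$ has $\mathrm{SetTime}$ much below $\mathrm{SetTime}(\{2,3,4,5,6,10\})$ because each set omits expensive middle sizes.

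The step I expect to be the real obstacle is not any single inequality but the careful handling of SSD's control flow: one must verify that neither the pruning condition nor the Pareto-style update rule ever discards or fails to register the witness configuration built from $\mathcal{S}$, and that ``Pareto improvement $\Rightarrow$ non-increasing $\max$'' is precisely what the update test enforces --- this is what upgrades the reduction from plausible to airtight. A lighter, more conceptual point is to spell out the equivalence between ``IDP evaluates sizes $\{2,\dots,\lfloor 2n/3\rfloor,n\}$'' and ``this size set connects every node to the bottom node in the integer partition graph,'' which is IDP's correctness argument re-expressed on that graph and can be taken from~\cite{rahwan2008improved}.
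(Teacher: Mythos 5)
Your proposal is correct and follows essentially the same route as the paper: you exhibit IDP's size set $\{2,\dots,\lfloor 2n/3\rfloor,n\}$ as a feasible configuration within SSD's offline optimization and conclude that the returned pair, being run-time optimal over a family containing it, cannot be slower. The only difference is one of rigor, not of approach --- you verify SSD's control flow (the witness pair, the pruning test, and the Pareto-style update keeping $\max(t_1^*,t_2^*)$ non-increasing) explicitly, where the paper simply asserts that SSD selects the pair with the shortest resulting run time.
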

\begin{proof}
Let $\mathcal{S}_{IDP}$ be the set of sizes used by IDP~\cite{rahwan2008improved}. $\mathcal{S}_{IDP}$ is hand tuned and is always equal to $\{2,3,..,\frac{2n}{3},n\}$ for $n$ agents. Let $\mathcal{S}_{1}$ and $\mathcal{S}_{2}$ be the sets of sizes used by CDP, which are tuned automatically by the SSD algorithm. $\mathcal{S}_{1}$ and $\mathcal{S}_{2}$ are configured to be the pair of sets with the shortest resulting run time. We now show that IDP can not have a better run time than CDP. Let $time(\mathcal{S}_{i})$ be the run time produced by the set $i$. Suppose now that there exists a number of agents for which $\mathcal{S}_{IDP}$ is the best set of sizes to consider and that no pair of sets can produce a better run time. In this case, as the sets $\mathcal{S}_{1}$ and $\mathcal{S}_{2}$ find the shortest run time, they must coincide with $\mathcal{S}_{IDP}$, meaning that $max(time(\mathcal{S}_{1}),time(\mathcal{S}_{2}))=time(\mathcal{S}_{IDP})$. Hence, the SSD algorithm would find the sets $\mathcal{S}_{1} =\mathcal{S}_{IDP}$ and $\mathcal{S}_{2}$ is not necessary because $\mathcal{S}_{IDP}$ is sufficient to search the entire solution space and the statement follows.
\end{proof}

Lemma~1 shows a notable property of CDP, and hence of SMART. In particular, it enables us to prove the following theorem. To the best of our knowledge, ODP-IP~\cite{michalak2016hybrid}, ODSS~\cite{changder2020odss}, and BOSS~\cite{Changder_Aknine_Ramchurn_Dutta_2021} are the fastest prior optimal algorithms for the CSG problem. 
\begin{thm}
In the worst case, SMART is faster than or at least as fast as ODP-IP, ODSS, and BOSS.
\end{thm}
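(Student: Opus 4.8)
The plan is to reduce the statement to a short chain of inequalities between worst-case running times, with Lemma~1 as the anchor. First I would recall the status of the competitors: ODP-IP, ODSS, and BOSS are all hybridizations of a dynamic-programming routine (IDP, or a size-restricted variant of it) with the integer-partition branch-and-bound search IP, and, as noted in the introduction, IP yields no worst-case improvement, so in the worst case each of these algorithms runs at least as long as its dynamic-programming component. Writing $time_{w}(\cdot)$ for worst-case running time, this gives $time_{w}(\text{ODP-IP}),\, time_{w}(\text{ODSS}),\, time_{w}(\text{BOSS}) \geq time_{w}(\text{IDP})$.

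Next I would invoke Lemma~1, which states $time_{w}(\text{CDP}) \leq time_{w}(\text{IDP})$: the SSD offline phase selects the size-set pair minimizing the resulting run time, and IDP's hand-tuned size set $\{2,\dots,\tfrac{2n}{3},n\}$ is one feasible choice for CDP (used as a single set that, by Theorem~1, already searches the whole space), so CDP can never be slower than IDP.

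The remaining and principal step is to show that SMART's worst-case running time is bounded by that of its CDP component, i.e. $time_{w}(\text{SMART}) \leq time_{w}(\text{CDP})$. Here I would use Theorem~1: the CDP component of SMART, run on the size-set pair returned by SSD, on its own fully searches every solution subspace and hence certifies an optimal coalition structure. Since the GRAD and DIPS components run concurrently with CDP and only ever \emph{prune} subspaces or \emph{absorb} processes released by CDP/GRAD — never delaying CDP — SMART halts no later than the moment CDP alone would finish; in particular, once CDP has evaluated all coalitions of its sizes, every node is connected to the bottom node and is pruned, so SMART terminates. Chaining the three inequalities, $time_{w}(\text{SMART}) \leq time_{w}(\text{CDP}) \leq time_{w}(\text{IDP}) \leq time_{w}(\text{ODP-IP}),\, time_{w}(\text{ODSS}),\, time_{w}(\text{BOSS})$, which is the claim.

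I expect the delicate point to be this last step: making precise that the parallel composition in SMART incurs no worst-case overhead — that the bookkeeping for subspace distribution, upper-bound computation, and pruning is dominated by, or overlapped with, the CDP evaluation, and that reassigning a released CDP/GRAD process to DIPS cannot push termination past the point CDP would have reached alone. A secondary subtlety worth stating explicitly is the premise used in the first paragraph, namely that IP does not improve the worst case of the competing algorithms; for full rigor one would cite the worst-case analyses of ODP-IP, ODSS, and BOSS showing their worst-case bound coincides with that of their dynamic-programming part.
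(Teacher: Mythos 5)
Your proposal matches the paper's own argument essentially step for step: the paper also proves the theorem by noting that the worst case of ODP-IP, ODSS, and BOSS is determined by their IDP component (since IP needs $O(n^n)$ in the worst case), invoking Lemma~1 to get $time(\text{CDP}) \leq time(\text{IDP})$, and bounding SMART's worst case by its CDP part via $T_{SMART} = \min(O(n^n), time(\text{GRAD}), time(\text{CDP})) \leq time(\text{CDP})$, yielding the same chain of inequalities. Your added caveat about parallel-composition overhead is a reasonable refinement, but it does not change the route, which coincides with the paper's proof.
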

\begin{proof}
SMART uses the CDP algorithm, which relies on having the best pair of coalition size sets that enables fast search of optimal results. By Lemma~1, CDP is faster than IDP. 
Recall that the fastest exact algorithms are the hybrid solutions, ODP-IP, ODSS, and BOSS, that combine IDP and an integer partition-based algorithm. However, this combination is highly dependent on the efficiency of the integer partition-based algorithm, which in the worst case requires searching all the coalition structures in $O(n^{n})$ time~\cite{rahwan2009anytime}, which is infeasible in a reasonable time. Thus, in the worst case, the run time of such hybrid algorithms is determined by the dynamic programming approach. Hence, for hard problems, SMART represents the fastest solution as the dynamic programming algorithm used (CDP) is faster than IDP used by the other algorithms. 
Formally, let $T_{SMART}$ be the time complexity of SMART, where $T_{SMART} = min(O(n^{n}), time(GRAD), time(CDP)) \leq time(CDP)$ and let $T_{Other}$ be the time complexity of the other algorithms, where $T_{Other} = min(O(n^{n}), time(IDP)) = time(IDP)$. Given that $time(CDP) \leq time(IDP)$ by Lemma~1, $T_{SMART} \leq T_{Other}$. 
\end{proof}

The result of this hybridization is threefold: (1) CDP is faster than the 
dynamic programming algorithm IDP (results reported in Section~6). 
This allows SMART to be faster in the worst case than the state-of-the-art algorithms ODP-IP~\cite{michalak2016hybrid} and BOSS~\cite{Changder_Aknine_Ramchurn_Dutta_2021} because the CDP part of SMART is faster than the IDP algorithm used by ODP-IP and BOSS, and the worst case time performance of these algorithms is determined by their dynamic programming parts; (2) GRAD gradually searches the best size sets for each percentage of solution subspaces. This allows SMART to reach the number of subspaces needed to guarantee finding an optimal solution with the best run time; (3) The integer partitions are distributed among several processes, enabling  an efficient and faster search in the integer partition graph. 

\begin{algorithm}[h!]
\KwIn{Set of all possible coalitions and the value $v(C)$ of each coalition $\mathcal{C}$ for $n$ agents. 
Sets of coalition sizes $\mathcal{BS}_{1}$, $\mathcal{BS}_{2}$ to consider by CDP and $\mathcal{BS}_{i}$ to consider by GRAD.}
\KwOut{An optimal coalition structure $CS^{*}$ and its value.}
\DontPrintSemicolon
 Generate a partial integer partition graph with all subspaces and with no edges\;
 Sort the subspaces by their upper bounds\;
 \nonl $\triangleright$ Begin parallel\;
 \nonl $\triangleright$ CDP runs in parallel with DIPS and GRAD\;
 $\mathcal{CS}^{*}$, $\mathcal{V}^{*}$ $\leftarrow$ CDP$(v,n,\mathcal{BS}_{1},\mathcal{BS}_{2})$\;
 Return $\mathcal{CS}^{*}$, $\mathcal{V}^{*}$\;
 \nonl $\triangleright$ GRAD runs in parallel with DIPS and CDP\;
 $\mathcal{CS}^{*}$, $\mathcal{V}^{*}$ $\leftarrow$ GRAD$(v,n,\mathcal{BS}_{i})$\;
 Return $\mathcal{CS}^{*}$, $\mathcal{V}^{*}$\;
 \nonl $\triangleright$ DIPS runs in parallel with CDP and GRAD\;
 \ForEach{promising subspace $\mathcal{PS}$}{
 DIPS searches the subspace $\mathcal{PS}$\;
 }
 Return $\mathcal{CS}^{*}$, $\mathcal{V}^{*}$\;
 \nonl $\triangleright$ End parallel\;
 \caption{The SMART algorithm} \label{smartcode}
\end{algorithm}

\section{Analysis of SMART}

In this section, we prove that the SMART algorithm is complete in Theorem 3. Then, we analyze the computational complexity of the algorithms in detail.

\begin{thm}
The SMART algorithm always finds the optimal solution.
\end{thm}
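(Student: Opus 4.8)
The plan is to reduce completeness of SMART to the completeness of its CDP component together with the soundness of the pruning performed by GRAD and DIPS. First I would recall that the integer partition graph induces a partition of the entire search space: $\Pi(\mathcal{A}) = \bigcup_{\mathcal{P}} \Pi_{\mathcal{P}}$, where the union ranges over all integer partitions $\mathcal{P}$ of $n$ and the sets $\Pi_{\mathcal{P}}$ are pairwise disjoint. Hence the optimal coalition structure $\mathcal{CS}^{*}$ lies in exactly one subspace, and any procedure that either searches every subspace, or searches every subspace it does not provably dominate, must return an optimal solution.

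Next I would invoke Theorem 1. The size-set pair $\mathcal{BS}_{1}, \mathcal{BS}_{2}$ returned by SSD is, by construction, one for which every node of the integer partition graph is connected to the bottom node of at least one of the two partial graphs: the all-sizes configuration $\mathcal{BS}_{1}=\mathcal{BS}_{2}=2^{n-2}-1$ (the initialisation of SSD) already has this property, and the test $v \subseteq z$ in Algorithm \ref{SSD} ensures SSD only ever replaces it by another pair with the same coverage. Therefore Theorem 1 applies and CDP fully searches all subspaces, so the coalition structure CDP outputs has value $V(\mathcal{CS}^{*})$. Since SMART runs CDP as one of its parallel processes and always retains the highest-valued coalition structure found by any process, the value SMART reports is at least $V(\mathcal{CS}^{*})$; being itself a feasible coalition structure, it is also at most $V(\mathcal{CS}^{*})$, and hence equal to the optimum. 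This already establishes the theorem.

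For robustness I would then show that the additional pruning by GRAD and DIPS never discards a strictly better coalition structure, so that the parallel interaction cannot corrupt the shared incumbent. A subspace $\mathcal{N}$ is removed only in two situations. (a) $\mathcal{N}$ becomes connected to the bottom node of a GRAD (or CDP) partial graph after that process has finished evaluating all coalitions of some size $s$; applying the argument in the proof of Theorem 1 to the single node $\mathcal{N}$, every coalition that must be split to realise the best structure in $\mathcal{N}$ has then been evaluated, so the optimal value over $\Pi_{\mathcal{N}}$ is at most $V_{t}(\mathcal{A})$, which that process has already written into the incumbent. (b) $\mathcal{UB}(\mathcal{N}) = \sum_{i \in Integers(\mathcal{N})} Max_i$ is not larger than the incumbent value; since $v(\mathcal{C}) \le Max_{|\mathcal{C}|}$ for every coalition $\mathcal{C}$, we have $V(\mathcal{CS}) \le \mathcal{UB}(\mathcal{N})$ for all $\mathcal{CS} \in \Pi_{\mathcal{N}}$, so no improving structure is lost. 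Consequently every subspace is either explicitly searched by some process or soundly pruned, and SMART terminates with an optimal coalition structure; the DIPS branch-and-bound step is sound for the same upper-bound reason.

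I expect the main obstacle to be this last paragraph: rigorously justifying that ``$\mathcal{N}$ connected to the bottom node'' implies ``the optimum of $\mathcal{N}$ is already reflected in the shared incumbent'' in the presence of three asynchronous processes — in particular, arguing that the incumbent is updated monotonically and that the value computed from $P_{t}$ and $V_{t}$ at the instant of pruning genuinely bounds the optimum over every newly connected subspace. Everything else (disjointness of the subspaces, validity of $\mathcal{UB}$, and the appeal to Theorem 1) is routine, and in fact the appeal to CDP alone in the second paragraph is enough for a self-contained proof of Theorem 3.
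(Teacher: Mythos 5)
Your proposal is correct, and it proves more than the paper's own proof actually spells out. The paper's argument for Theorem~3 is a one-step case analysis: every node is eventually searched or pruned, and GRAD/DIPS ``will only prune nodes that have no chance of containing the optimal solution,'' a claim it asserts without justification. You take a different decomposition: you first reduce completeness to CDP alone, arguing that SSD's initialisation (all sizes) has the coverage property of Theorem~1 and that the $v \subseteq z$ test preserves it, so CDP by itself returns a structure of optimal value and SMART, retaining the best incumbent, can do no worse; you then separately verify that the two pruning rules (connectivity to the bottom node, and the upper bound $\mathcal{UB}(\mathcal{N})=\sum_{i}Max_i$) are sound, which is exactly the unproved sentence in the paper's proof. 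What your route buys is a self-contained argument that does not rest on a blanket soundness claim, plus an explicit identification of the only delicate point, namely that with three asynchronous processes the incumbent must be updated monotonically and must already reflect the optimum of any subspace at the instant it is pruned; the paper glosses over this entirely. One caveat: your statement that the second paragraph ``already establishes the theorem'' holds only if SMART's termination cannot precede CDP with an unsound incumbent, which is precisely what your third paragraph supplies, so the two parts should be presented as jointly necessary rather than the latter being optional robustness.
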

\begin{proof}
Each node in the integer partition graph is searched by SMART using the CDP, GRAD, and DIPS algorithms. A node represents a subspace, which contains a number of coalition structures that match the parts of the subspace. The SMART algorithm returns the final solution when all nodes have been searched or pruned. For a particular node that contains the optimal coalition structure, the only way for SMART not to search it is for DIPS or GRAD to prune it without any of the three algorithms searching it. However, DIPS or GRAD will only prune nodes that have no chance of containing the optimal solution. Thus, such a node would never be pruned, and one of the three algorithms would always completely search the node that contains an optimal solution.
\end{proof}

\noindent \textbf{Time complexity of the SSD algorithm.}\: 
For each problem size $n$, the SSD algorithm tests all possible pairs of coalition size sets. For each set of coalition sizes, SSD reconstructs the integer partition graph by dividing only the integers that belong to that set and tests whether it is beneficial to pair that set with another set or not. 
To test this for one set, CDP pairs it with all the other sets. The total number of sets that SSD evaluates is $2^{n-2}-1$. We denote by $\mathcal{I}(s)$ the number of integer splits performed for a set $s$ and by $p(n,s)$, the number of subspaces generated by the set $s$. 
The total number of operations performed by SSD for one set is 
$\mathcal{T}(n) = \sum_{s=1}^{2^{n-2}-1} \mathcal{I}(s) = \sum_{s=1}^{2^{n-2}-1} \sum_{\mathcal{N}} \mathcal{I}(\mathcal{N},s)$, where $\mathcal{I}(\mathcal{N},s)$ is the number of integer splits performed on node $\mathcal{N}$. 
The number of splits into two for a certain integer $i$ is $\frac{i}{2}$, the highest integer to split is $n$, and the highest possible number of integers in a single node is $n$, which is the number of integers of the node that represents the singleton coalition structure. Thus, $\mathcal{I}(\mathcal{N},s) \leq n \times \frac{n}{2}$, and 
$\mathcal{T}(n) \leq \sum_{s=1}^{2^{n-2}-1} p(n,s) \times n \times \frac{n}{2}$. However, the growth rate of the number of nodes in the integer partition graph, which is the same as the growth rate of integer partitions of $n$, is  $\mathcal{O}(\frac{e^{\pi\sqrt{\frac{2n}{3}}}}{n})$~\cite{herbert2000lectures}. Hence,\\
$\mathcal{T}(n) \leq \sum_{s=1}^{2^{n-2}-1} \mathcal{O}(\frac{e^{\pi\sqrt{\frac{2n}{3}}}}{n}) \times n \times \frac{n}{2}$ $ \leq (2^{n-2}-1) \times \mathcal{O}(\frac{e^{\pi\sqrt{\frac{2n}{3}}}}{n}) \times n \times \frac{n}{2}$.\\ 
As a result, the total number of operations of SSD when testing one set is 
$\mathcal{O}(n^{2} \times 2^{n} \times \frac{e^{\pi\sqrt{\frac{2n}{3}}}}{n})$. SSD tests $2^{n-2}-1$ different sets. Thus, the time complexity of SSD is
$\mathcal{O}(2^{2n} \times n^{2} \times \frac{e^{\pi\sqrt{\frac{2n}{3}}}}{n})$.


\noindent \textbf{Time complexity of the SMART algorithm.}\: 
SMART combines three algorithms--CDP, GRAD, and DIPS, and runs them in parallel. The worst-case run time of dynamic programming on this problem is $\mathcal{O}(3^{n})$~\cite{yeh1986dynamic}. CDP and GRAD are based on dynamic programming. They run in parallel on several sets of sizes and terminate when all sets are fully evaluated. Thus, the time complexity of both CDP and GRAD is $\mathcal{O}(3^{n})$. 
The worst-case run time of DIPS, which, in the worst-case, requires us to search all coalition structures, is $\mathcal{O}(n^{n})$. As a result, the time complexity of SMART is $\min(\mathcal{O}(3^{n}),\mathcal{O}(3^{n}),\mathcal{O}(n^{n})) = \mathcal{O}(3^{n})$.


\section{Empirical Evaluation}

We now evaluate the effectiveness of SMART 
by comparing it to the prior state-of-the-art algorithms ODP-IP and BOSS. We implemented SMART in Java and for ODP-IP and BOSS, we used the codes provided by their authors for the comparisons. They are also written in Java. The algorithms were run on an Intel Xeon 2.30GHz E5-2650 CPU with 256GB of RAM. For GRAD, we considered values of $\omega$ $\in \{10\%,20\%,30\%,40\%,50\%,60\%,70\%,80\%,90\%,100\%\}$. 
We also designed and tested a different version of ODP-IP, namely POI (Parallel ODP-IP), that we developed to integrate parallelism in the baseline version of ODP-IP in order to improve its performance. It uses the same number of processes as SMART (see the appendix for more details). This does not affect the theoretical guarantees but improves the practical performances of the algorithm.

We conducted the experiments on common benchmark problems. We show results on nine value distributions. Results on other distributions are in the appendix. We compared the algorithms using the following value distributions: Modified Normal~\cite{rahwan2012hybrid}, Beta, Exponential, Gamma~\cite{michalak2016hybrid}, Normal~\cite{rahwan2007near}, Uniform~\cite{larson2000anytime}, Modified Uniform~\cite{service2010approximate}, Zipf, SVA Beta and Weibull~\cite{changder2020odss}. The experiments shown in the remainder of the paper are also representative of those in the appendix. For each distribution and number of agents, we ran each algorithm 50 times. Figure \ref{fig:three graphs} reports the run times of SMART, BOSS, ODP-IP and POI. 
On all distributions, SMART was the fastest for all numbers of agents.  For example, after 2 seconds, with the Normal distribution for 24 agents, SMART returns optimal solutions roughly 92\% faster than BOSS, 91\% faster than ODP-IP and 54\% faster than POI, while outperforming them by multiple orders of magnitude as can be seen in Figure \ref{fig:three graphs}. The reason for this is twofold. First, when problems are hard to solve (see, for instance, the results for Exponential), the CDP part of SMART finishes before the other algorithms as it presents the best worst-case time performance. 
The second reason is that for problems where the search of a specific percentage of the solution subspaces is sufficient to find the optimal solution, the combination of GRAD and DIPS  achieves the best run time. On one hand, GRAD searches that percentage of subspaces with the best run time. On the other hand, DIPS distributes the search to further accelerate it. 
Notice that the relative contribution of each technique depends on the specific problem instance. Generally, for easier-to-solve instances, DIPS and GRAD play a more significant role in finding the optimal solution, as they target specific subspaces with the upper bound for DIPS and percentages for GRAD. As the problem becomes more difficult, CDP becomes increasingly important for searching a larger portion of the solution space, as it aims to search the entire solution space. In the worst-case scenario, CDP is the fastest technique to search the entire solution space. Hence, all algorithms have a goal and help each other achieve it as explained in Section 4.  
Additional experimental insights are described in the appendix.

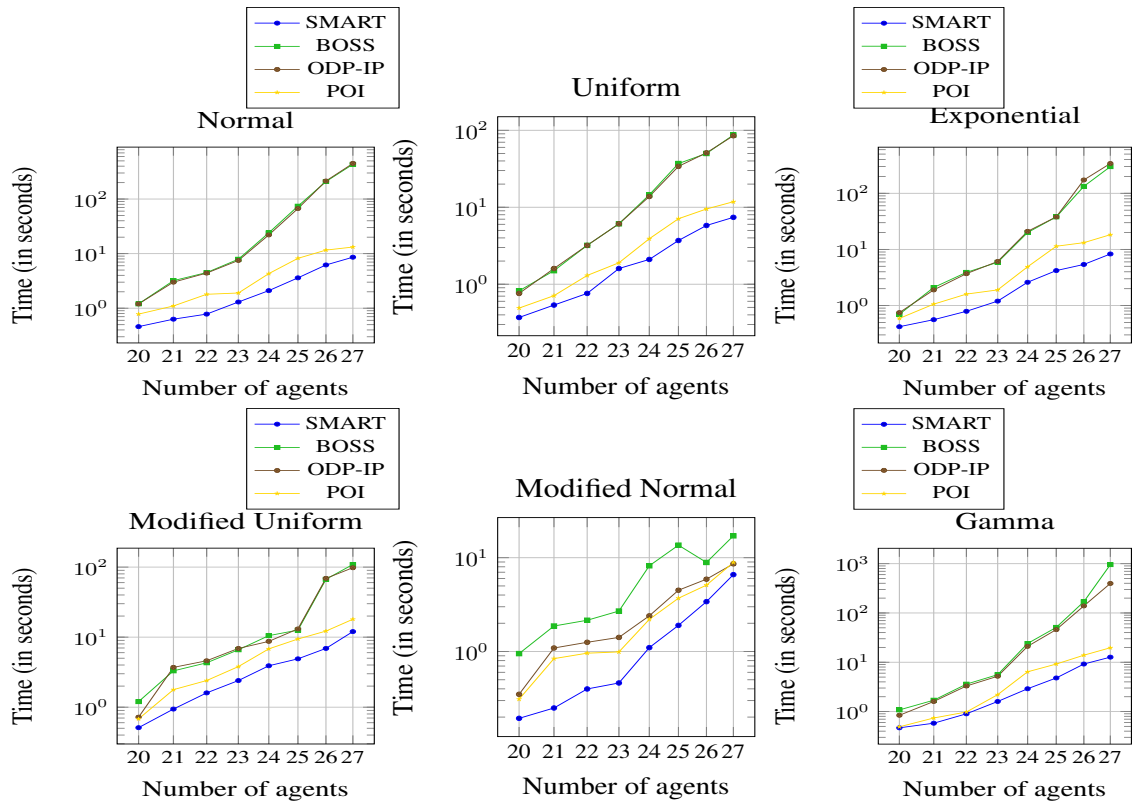
\begin{figure*}[h!]
\begin{center}
\small
\resizebox{0.28\textwidth}{5.3cm}{%
     \begin{subfigure}{0.28\textwidth}
         \centering
         \begin{tikzpicture}
	\begin{loglogaxis}[	grid= major ,
            title=Normal ,
			width=0.8\textwidth ,
            xlabel = {Number of agents} ,
			ylabel = {Time (in seconds)} ,
            width=5cm,height=5cm,
			xtick={20,21,22,23,24,25,26,27},
            xticklabels={20,21,22,23,24,25,26,27},
            ymode=log,
            ymin=0,
            label style={font=\large},
			title style={font=\Large},
			tick label style={font=\footnotesize},
            legend entries={SMART, BOSS, ODP-IP, POI},
			legend style={at={(0.8,1.715)},anchor=north}]
			\addplot+[ultra thin,mark size=1pt] coordinates {(20,0.46) (21,0.63) (22,0.78)  (23,1.3) (24,2.1) (25,3.6) (26,6.2) (27,8.6) };
			\addplot+[ultra thin,mark size=1pt,vert,mark options={fill=vert}] coordinates {(20,1.2) (21,3.2) (22,4.5) (23,7.9) (24,24) (25,74) (26,209) (27,433) };
            \addplot+[ultra thin,mark size=1pt] coordinates {(20,1.2) (21,3.0) (22,4.4) (23,7.5) (24,22) (25,67) (26,214) (27,449) };
            \addplot+[ultra thin,mark size=1pt,gold] coordinates {(20,0.78) (21,1.1) (22,1.8) (23,1.9) (24,4.3) (25,8.2) (26,11.6) (27,13.2) };
            
	\end{loglogaxis}
    \end{tikzpicture}\\
         \label{fig:three sin x}
     \end{subfigure}
     }
\resizebox{0.28\textwidth}{5.3cm}{%
     \begin{subfigure}{0.28\textwidth}
         \centering
         \begin{tikzpicture}
	\begin{loglogaxis}[	grid= major ,
            title=Uniform ,
			width=0.8\textwidth ,
            xlabel = {Number of agents} ,
			ylabel = {Time (in seconds)} ,
            width=5cm,height=5cm,
			xtick={20,21,22,23,24,25,26,27},
            xticklabels={20,21,22,23,24,25,26,27},
            ymode=log,
            label style={font=\large},
			title style={font=\Large},
			tick label style={font=\footnotesize},
            legend entries={.,.,.,.},
			legend style={at={(1.75,1.5)},anchor=north}]
			\addplot+[ultra thin,mark size=1pt] coordinates {(20,0.37) (21,0.536) (22,0.76) (23,1.6) (24,2.1) (25,3.7) (26,5.8) (27,7.4) }; 
			\addplot+[ultra thin,mark size=1pt,vert,mark options={fill=vert}] coordinates {(20,0.82) (21,1.5) (22,3.2) (23,6.1) (24,14.5) (25,37) (26,50) (27,87) };
			\addplot+[ultra thin,mark size=1pt] coordinates {(20,0.76) (21,1.6) (22,3.2) (23,6.1) (24,13.8) (25,34) (26,51) (27,85) };
			\addplot+[ultra thin,mark size=1pt,gold] coordinates {(20,0.49) (21,0.71) (22,1.3) (23,1.9) (24,3.9) (25,7.1) (26,9.5) (27,11.8) };
			
	\end{loglogaxis}
    \end{tikzpicture}\\
         \label{fig:three sin x}
     \end{subfigure}
     }
     \resizebox{0.28\textwidth}{5.3cm}{%
     \begin{subfigure}{0.28\textwidth}
         \centering
         \begin{tikzpicture}
	\begin{loglogaxis}[	grid= major ,
            title=Exponential ,
			width=0.8\textwidth ,
            xlabel = {Number of agents} ,
			ylabel = {Time (in seconds)} ,
            width=5cm,height=5cm,
			xtick={20,21,22,23,24,25,26,27},
            xticklabels={20,21,22,23,24,25,26,27},
            ymode=log,
            label style={font=\large},
			title style={font=\Large},
			tick label style={font=\footnotesize},
            legend entries={SMART, BOSS, ODP-IP, POI},
			legend style={at={(0.2,1.715)},anchor=north}]
			\addplot+[ultra thin,mark size=1pt] coordinates {(20,0.42) (21,0.56) (22,0.79) (23,1.2) (24,2.6) (25,4.2) (26,5.4) (27,8.3) }; 
			\addplot+[ultra thin,mark size=1pt,vert,mark options={fill=vert}] coordinates {(20,0.7) (21,2.1) (22,3.9) (23,5.9) (24,20) (25,38) (26,133) (27,299) };
			\addplot+[ultra thin,mark size=1pt] coordinates {(20,0.75) (21,1.9) (22,3.7)(23,6.1) (24,21) (25,38) (26,173) (27,340) };
			\addplot+[ultra thin,mark size=1pt,gold] coordinates {(20,0.59) (21,1.06) (22,1.6) (23,1.9) (24,4.9) (25,11.4) (26,13.2) (27,18.3) };
			
	\end{loglogaxis}
    \end{tikzpicture}\\
         \label{fig:three sin x}
     \end{subfigure}
}
\hfill
     \resizebox{0.28\textwidth}{5.3cm}{%
     \begin{subfigure}{0.28\textwidth}
         \centering
         \begin{tikzpicture}
	\begin{loglogaxis}[	grid= major ,
            title=Modified Uniform ,
			width=0.8\textwidth ,
            xlabel = {Number of agents} ,
			ylabel = {Time (in seconds)} ,
            width=5cm,height=5cm,
            xtick={20,21,22,23,24,25,26,27},
            xticklabels={20,21,22,23,24,25,26,27},
            ymode=log,
            label style={font=\large},
			title style={font=\Large},
			tick label style={font=\footnotesize},
            legend entries={SMART, BOSS, ODP-IP, POI},
			legend style={at={(0.8,1.715)},anchor=north}]
			\addplot+[ultra thin,mark size=1pt] coordinates {(20,0.51) (21,0.94) (22,1.6) (23,2.4) (24,3.9) (25,4.9) (26,6.9) (27,12) };
			\addplot+[ultra thin,mark size=1pt,vert,mark options={fill=vert}] coordinates {(20,1.2) (21,3.3) (22,4.3) (23,6.6) (24,10.5) (25,12.6) (26,67) (27,109) };
			\addplot+[ultra thin,mark size=1pt] coordinates {(20,0.72) (21,3.7) (22,4.6) (23,6.9) (24,8.7) (25,13.1) (26,69) (27,98) };
			\addplot+[ultra thin,mark size=1pt,gold] coordinates {(20,0.68) (21,1.78) (22,2.4) (23,3.8) (24,6.8) (25,9.4) (26,12.2) (27,18) };

	\end{loglogaxis}
    \end{tikzpicture}\\
         \label{fig:three sin x}
     \end{subfigure}
     }
     \resizebox{0.28\textwidth}{5.3cm}{%
     \begin{subfigure}{0.28\textwidth}
         \centering
         \begin{tikzpicture}
	\begin{loglogaxis}[	grid= major ,
            title=Modified Normal ,
			width=0.8\textwidth ,
            xlabel = {Number of agents} ,
			ylabel = {Time (in seconds)} ,
            width=5cm,height=5cm,
			xtick={20,21,22,23,24,25,26,27},
            xticklabels={20,21,22,23,24,25,26,27},
            ymode=log,
            label style={font=\large},
			title style={font=\Large},
			tick label style={font=\footnotesize},
            legend entries={.,.,.},
			legend style={at={(1.75,1.5)},anchor=north}]
			\addplot+[ultra thin,mark size=1pt] coordinates {(20,0.194) (21,0.25) (22,0.399) (23,0.462) (24,1.1) (25,1.9) (26,3.4) (27,6.6) }; 
			\addplot+[ultra thin,mark size=1pt,vert,mark options={fill=vert}] coordinates {(20,0.95) (21,1.87) (22,2.154) (23,2.7) (24,8.2) (25,13.6) (26,8.9) (27,17.1) };
			\addplot+[ultra thin,mark size=1pt] coordinates {(20,0.35) (21,1.09) (22,1.254) (23,1.414) (24,2.4) (25,4.5) (26,5.9) (27,8.6) };
			\addplot+[ultra thin,mark size=1pt,gold] coordinates {(20,0.31) (21,0.84) (22,0.96) (23,0.99) (24,2.2) (25,3.7) (26,5.1) (27,8.9) };
			
	\end{loglogaxis}
    \end{tikzpicture}\\
         \label{fig:three sin x}
     \end{subfigure}
     }
     \resizebox{0.28\textwidth}{5.3cm}{%
     \begin{subfigure}{0.28\textwidth}
         \centering
         \begin{tikzpicture}
	\begin{loglogaxis}[	grid= major ,
            title=Gamma ,
			width=0.8\textwidth ,
            xlabel = {Number of agents} ,
			ylabel = {Time (in seconds)} ,
            width=5cm,height=5cm,
			xtick={20,21,22,23,24,25,26,27},
            xticklabels={20,21,22,23,24,25,26,27},
            ymode=log,
            label style={font=\large},
			title style={font=\Large},
			tick label style={font=\footnotesize},
            legend entries={SMART, BOSS, ODP-IP, POI},
			legend style={at={(0.2,1.715)},anchor=north}]
			\addplot+[ultra thin,mark size=1pt] coordinates {(20,0.468) (21,0.58) (22,0.9) (23,1.6)  (24,2.9) (25,4.8) (26,9.2) (27,12.7) }; 
			\addplot+[ultra thin,mark size=1pt,vert,mark options={fill=vert}] coordinates {(20,1.08) (21,1.7) (22,3.6) (23,5.6)  (24,24) (25,51) (26,169) (27,957) };
			\addplot+[ultra thin,mark size=1pt] coordinates {(20,0.84) (21,1.6) (22,3.3) (23,5.2) (24,21) (25,46) (26,139) (27,397) };
			\addplot+[ultra thin,mark size=1pt,gold] coordinates {(20,0.49) (21,0.74) (22,0.98) (23,2.2) (24,6.4) (25,9.2) (26,13.9) (27,19.7) };
			
	\end{loglogaxis}
    \end{tikzpicture}\\
         \label{fig:three sin x}
     \end{subfigure}
}

             \caption{\small Run time of SMART, BOSS, ODP-IP, and POI.}
        \label{fig:three graphs}
\normalsize
\end{center}
\end{figure*}

We also report the empirical performance of CDP, which, as discussed earlier in this paper, determines the worst-case run time of SMART. We compared CDP to the dynamic programming algorithm IDP~\cite{rahwan2008improved} used by prior hybrid algorithms, and to the fastest dynamic programming algorithm to date, ODP~\cite{michalak2016hybrid}. Notice that ODP, which stands for Optimal DP, is optimal in the sense that it evaluates a minimum number of coalitions to find the optimal solution, not in the sense of run time, meaning that the number of evaluated coalitions is optimal. This does not mean that the resulting run time is optimal. We also compared CDP to a parallel version of IDP  (P-IDP). P-IDP uses the same technique presented in~\cite{cruz2017coalition} and uses two processes to evaluate the coalitions. The evaluation of the coalitions of the same size can be distributed because the coalitions of the same size are independent of each other. Hence, for every coalition size 
$s$, each process of P-IDP evaluates half of the coalitions of size $s$. 
\begin{table}
\begin{center}
\renewcommand{\arraystretch}{1.0}
\begin{tabular}{|c|c|c|c|c|c|}
\hline
\textbf{Number of Agents} & \multicolumn{4}{c|}{\textbf{Execution Time}} \\
\cline{2-5}
 &  \textbf{CDP} & \textbf{IDP} & \textbf{ODP} & \textbf{P-IDP} \\
  &  ($t_{1}$) & ($t_{2}$) &  &  \\
\hline
20 & $1.7$ & $3.7$ & $2.2$ & $2.0$ \\
\hline
21 & $7.4$ & $15.9$ & $10.3$ & $7.9$ \\
\hline
22 & $12.3$ & $24.7$ & $19.8$ & $15.7$ \\
\hline
23 & $57$ & $131$ & $79$ & $69$ \\
\hline
24 & $205$ & $507$ & $382$ & $257$ \\
\hline
25 & $427$ & $887$ & $675$ & $593$ \\
\hline
26 & $1781$ & $3659$ & $2846$ & $2267$ \\
\hline
27 & $3390$ & $7078$ & $5508$ & $5196$ \\
\hline
\end{tabular}
\end{center}
\caption{\small Time in seconds of CDP, IDP, ODP, and P-IDP. The time  gain is shown in the appendix.}\label{tabresult}
\end{table}
Table~\ref{tabresult} shows the results. The run time of these algorithms depends only on the number of agents. As can be seen, CDP outperforms IDP by at least $50\%$. Moreover, CDP is also faster than ODP and P-IDP (See the appendix for the time difference between CDP and P-IDP). This experimentally confirms that SMART offers the best worst-case run time. This superior speed appears to translate into the superior practical performance of the SMART algorithm as well. With the most difficult distributions, such as Gamma (see Figure \ref{fig:three graphs}), the SMART algorithm is significantly faster than the other algorithms. 

\section{Conclusion}

In this paper, we developed an optimal algorithm, SMART, for the coalition structure generation problem. Our method contributes and combines  a number of ideas and techniques. First, we introduced several results concerning the choice of coalitions to evaluate. We used those results to build offline phases to optimize the choice of coalitions to evaluate. Second, we developed three techniques that have different pros. Two of them use the results of the offline phases. The third one uses branch-and-bound and integer partition graph search to explore the solution space. 
Finally, we combined these techniques by showing how they can assist one another during the search process. 
Experiments showed that SMART is faster than the fastest prior algorithms on all of the instance distributions for all numbers of agents.

\section*{Acknowledgments}

Tuomas Sandholm's research is supported by the Vannevar Bush Faculty Fellowship ONR N00014-23-1-2876, National Science Foundation grant RI-2312342, ARO award W911NF2210266, and NIH award A240108S001.

\bibliographystyle{named}
\bibliography{ijcai24}

\newpage
\appendix

In this appendix, we provide supplementary material. The extensive related work section has been included in the appendix to offer a thorough exploration of existing algorithms.

\section{Related Work}

Efficiently solving the coalition structure generation problem is computationally expensive when using a naive approach that involves enumerating all possible coalition structures~\cite{sandholm1999coalition}. To overcome this challenge, various representations of the search space have been proposed to reduce the time required to generate optimal coalition structures. 

\subsection{Coalition Structure Graph}

The coalition structure graph, first introduced by ~\cite{sandholm1999coalition}, is a way to represent the search space as a graph composed of nodes representing the coalition structures. For a given set of $n$ agents, these nodes are organized into $n$ levels, where each level consists of nodes representing coalition structures that contain exactly $i$ coalitions ($i \in \{1,..,n\}$). Each edge of this graph connects two nodes belonging to two consecutive levels, such that each coalition structure at level $i$ can be obtained by dividing a coalition from a coalition structure at level $i - 1$ into two coalitions. The graph of the coalition structures of four agents is shown in Figure \ref{CSG1}.

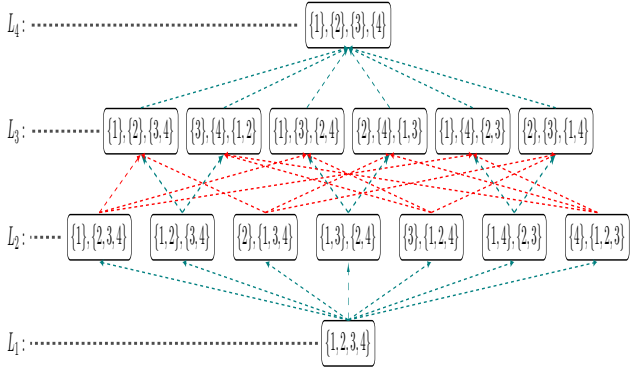
\begin{figure}[h!]
\centering
\normalsize
\resizebox{0.97\columnwidth}{4.85cm}{%
\begin{tikzpicture}
\large

\node[] (a)at(-4.5,-4.5){$L_{1}: $};
\draw[>=latex,black!70,dashed,very thick] (6.4,-4.5) -- (-4.0,-4.5);
\node[] (a)at(-4.5,-3){$L_{2}: $};
\draw[>=latex,black!70,dashed,very thick] (-2.9,-3) -- (-4.0,-3);
\node[] (a)at(-4.5,-1.5){$L_{3}: $};
\draw[>=latex,black!70,dashed,very thick] (-1.5,-1.5) -- (-4.0,-1.5);
\node[] (a)at(-4.5,0){$L_{4}: $};
\draw[>=latex,black!70,dashed,very thick] (5.65,0) -- (-4.0,0);

\node[draw,rectangle,rounded corners=3pt,black] (a)at(7.5,0){\color{black}$\{1\},\{2\},\{3\},\{4\}$};
\node[draw,rectangle,rounded corners=3pt,black] (b)at(0,-1.5){$\{1\},\{2\},\{3,4\}$};
\node[draw,rectangle,rounded corners=3pt,black] (c)at(3,-1.5){$\{3\},\{4\},\{1,2\}$};
\node[draw,rectangle,rounded corners=3pt,black] (d)at(6,-1.5){$\{1\},\{3\},\{2,4\}$};
\node[draw,rectangle,rounded corners=3pt,black] (e)at(9,-1.5){$\{2\},\{4\},\{1,3\}$};
\node[draw,rectangle,rounded corners=3pt,black] (f)at(12,-1.5){\color{black}$\{1\},\{4\},\{2,3\}$};
\node[draw,rectangle,rounded corners=3pt,black] (g)at(15,-1.5){\color{black}$\{2\},\{3\},\{1,4\}$};

\node[draw,rectangle,rounded corners=3pt,black] (h)at(-1.5,-3){\color{black}$\{1\},\{2,3,4\}$};
\node[draw,rectangle,rounded corners=3pt,black] (i)at(1.5,-3){\color{black}$\{1,2\},\{3,4\}$};
\node[draw,rectangle,rounded corners=3pt,black] (j)at(4.5,-3){\color{black}$\{2\},\{1,3,4\}$};
\node[draw,rectangle,rounded corners=3pt,black] (k)at(7.5,-3){\color{black}$\{1,3\},\{2,4\}$};
\node[draw,rectangle,rounded corners=3pt,black] (l)at(10.5,-3){\color{black}$\{3\},\{1,2,4\}$};
\node[draw,rectangle,rounded corners=3pt,black] (m)at(13.5,-3){\color{black}$\{1,4\},\{2,3\}$};
\node[draw,rectangle,rounded corners=3pt,black] (n)at(16.5,-3){\color{black}$\{4\},\{1,2,3\}$};
\node[draw,rectangle,rounded corners=3pt,black] (o)at(7.5,-4.5){\color{black}$\{1,2,3,4\}$};

\draw[->,>=latex,teal,dashed] (0,-1.17) -- (7.5,-0.32);
\draw[->,>=latex,teal,dashed] (3,-1.17) -- (7.5,-0.32);
\draw[->,>=latex,teal,dashed] (6,-1.17) -- (7.5,-0.32);
\draw[->,>=latex,teal,dashed] (9,-1.17) -- (7.5,-0.32);
\draw[->,>=latex,teal,dashed] (12,-1.17) -- (7.5,-0.32);
\draw[->,>=latex,teal,dashed] (15,-1.17) -- (7.5,-0.32);

\draw[->,>=latex,red,dashed] (-1.5,-2.65) -- (0,-1.83);
\draw[->,>=latex,red,dashed] (-1.5,-2.65) -- (6,-1.83);
\draw[->,>=latex,red,dashed] (-1.5,-2.65) -- (12,-1.83);
\draw[->,>=latex,teal,thick,dashed] (1.5,-2.65) -- (0,-1.83);
\draw[->,>=latex,teal,thick,dashed] (1.5,-2.65) -- (3,-1.83);
\draw[->,>=latex,red,dashed] (4.5,-2.65) -- (0,-1.83);
\draw[->,>=latex,red,dashed] (4.5,-2.65) -- (9,-1.83);
\draw[->,>=latex,red,dashed] (4.5,-2.65) -- (15,-1.83);
\draw[->,>=latex,teal,thick,dashed] (7.5,-2.65) -- (6,-1.83);
\draw[->,>=latex,teal,thick,dashed] (7.5,-2.65) -- (9,-1.83);
\draw[->,>=latex,red,dashed] (10.5,-2.65) -- (3,-1.83);
\draw[->,>=latex,red,dashed] (10.5,-2.65) -- (6,-1.83);
\draw[->,>=latex,red,dashed] (10.5,-2.65) -- (15,-1.83);
\draw[->,>=latex,teal,thick,dashed] (13.5,-2.65) -- (12,-1.83);
\draw[->,>=latex,teal,thick,dashed] (13.5,-2.65) -- (15,-1.83);
\draw[->,>=latex,red,dashed] (16.5,-2.65) -- (3,-1.83);
\draw[->,>=latex,red,dashed] (16.5,-2.65) -- (9,-1.83);
\draw[->,>=latex,red,dashed] (16.5,-2.65) -- (12,-1.83);

\draw[->,>=latex,teal,dashed] (7.5,-4.17) -- (-1.5,-3.35);
\draw[->,>=latex,teal,dashed] (7.5,-4.17) -- (1.5,-3.35);
\draw[->,>=latex,teal,dashed] (7.5,-4.17) -- (4.5,-3.35);
\draw[->,>=latex,teal,dashed] (7.5,-4.17) -- (7.5,-3.35);
\draw[->,>=latex,teal,dashed] (7.5,-4.17) -- (10.5,-3.35);
\draw[->,>=latex,teal,dashed] (7.5,-4.17) -- (13.5,-3.35);
\draw[->,>=latex,teal,dashed] (7.5,-4.17) -- (16.5,-3.35);

            \end{tikzpicture}%
        }
\caption{Coalition structure graph of 4 agents.} \label{CSG1}
\end{figure}

\subsection{Integer Partition Graph}

The \textit{integer partition graph}~\cite{rahwan2009anytime} is a representation of all possible coalition structures. Given $n$ agents, with this representation, each integer partition of $n$ is represented by a node, where two adjacent nodes are connected if and only if the integer partition in level $i$ can be reached from the one in level $i-1$ by splitting only one integer. For instance, for $n=4$, the set of partitions is: $\{[4],[1,3],[2,2],[1,1,2],[1,1,1,1]\}$. Each partition of $n$ sums to $n$. Figure \ref{IPG} shows a four-agent example of the integer partition graph. 

Each level 
$i \in \{1,2,..,n\}$ in this graph contains nodes representing integer partitions of $n$ containing $i$ parts. For instance, level 2 contains nodes where integer partitions of $n$ have two parts. In the graph, each partition $\mathcal{P}$ represents a subspace of solutions that contains a set of coalition structures in which the sizes of the coalitions match the parts of $\mathcal{P}$. For example, the node [1,1,2] consists of all coalition structures that contain two coalitions of size 1 and one coalition of size 2.~\cite{9643288} presented another layer above this graph using code vectors of coalition structures and~\cite{10.1007/978-3-030-69322-0_2} proposed a generalization of this graph to be able to capture task allocation. 

\begin{figure}[h!]
\centering
\resizebox{1.0\columnwidth}{5.3cm}{%
\begin{tikzpicture}
\large
\node[draw,rectangle,rounded corners=3pt] (a)at(7.5,0){$[1,1,1,1]$};

\node[draw,rectangle,rounded corners=3pt] (a)at(7.5,-2){$[1,1,2]$};

\node[draw,rectangle,rounded corners=3pt] (a)at(5.5,-4){$[1,3]$};

\node[draw,rectangle,rounded corners=3pt] (a)at(9.5,-4){$[2,2]$};

\node[draw,rectangle,rounded corners=3pt] (a)at(7.5,-6){$[4]$};

\node[] (a)at(-1.5,-6){$L_{1}: $};
\draw[>=latex,black!70,dashed,very thick] (7.0,-6) -- (-1.0,-6);
\node[] (a)at(-1.5,-4){$L_{2}: $};
\draw[>=latex,black!70,dashed,very thick] (0.1,-4) -- (-1.0,-4);
\node[] (a)at(-1.5,-2){$L_{3}: $};
\draw[>=latex,black!70,dashed,very thick] (6.5,-2) -- (-1.0,-2);
\node[] (a)at(-1.5,0){$L_{4}: $};
\draw[>=latex,black!70,dashed,very thick] (6.25,0) -- (-1.0,0);

\node[] (a)at(9.5,-6){$\Pi_{[4]}: $};
\node[] (a)at(11.3,-6){$\{\{a_{1},a_{2},a_{3},a_{4}\}\}$};

\node[] (a)at(10.8,-4){$\Pi_{[2,2]}: $};
\node[] (a)at(13,-4){$\{\{a_{1},a_{2}\},\{a_{3},a_{4}\}\}$};
\node[] (a)at(13,-4.5){$\{\{a_{1},a_{3}\},\{a_{2},a_{4}\}\}$};
\node[] (a)at(13,-5){$\{\{a_{1},a_{4}\},\{a_{2},a_{3}\}\}$};

\node[] (a)at(0.8,-4){$\Pi_{[1,3]}: $};
\node[] (a)at(3,-4){$\{\{a_{1}\},\{a_{2},a_{3},a_{4}\}\}$};
\node[] (a)at(3,-4.5){$\{\{a_{2}\},\{a_{1},a_{3},a_{4}\}\}$};
\node[] (a)at(3,-5){$\{\{a_{3}\},\{a_{1},a_{2},a_{4}\}\}$};
\node[] (a)at(3,-5.5){$\{\{a_{4}\},\{a_{1},a_{2},a_{3}\}\}$};

\node[] (a)at(9.0,-2){$\Pi_{[1,1,2]}: $};
\node[] (a)at(11.5,-2){$\{\{a_{1}\},\{a_{2}\},\{a_{3},a_{4}\}\}$};
\node[] (a)at(11.5,-2.5){$\{\{a_{1}\},\{a_{3}\},\{a_{2},a_{4}\}\}$};
\node[] (a)at(11.5,-3){$\{\{a_{1}\},\{a_{4}\},\{a_{2},a_{3}\}\}$};
\node[] (a)at(15.3,-2){$\{\{a_{2}\},\{a_{3}\},\{a_{1},a_{4}\}\}$};
\node[] (a)at(15.3,-2.5){$\{\{a_{2}\},\{a_{4}\},\{a_{1},a_{3}\}\}$};
\node[] (a)at(15.3,-3){$\{\{a_{3}\},\{a_{4}\},\{a_{1},a_{2}\}\}$};

\node[] (a)at(13.4,-2){$,$};
\node[] (a)at(13.4,-2.5){$,$};
\node[] (a)at(13.4,-3){$,$};

\node[] (a)at(9.5,0){$\Pi_{[1,1,1,1]}: $};
\node[] (a)at(12.3,0){{$\{\{a_{1}\},\{a_{2}\},\{a_{3}\},\{a_{4}\}\}$}};

\node[] (a)at(9.3,-5.3){\small$4=2+2$};
\node[] (a)at(9.3,-4.9){\small Split $4$};
\node[] (a)at(5.7,-5.3){\small$4=1+3$};
\node[] (a)at(5.7,-4.9){\small Split $4$};

\node[] (a)at(5.2,-3.3){\small$3=1+2$};
\node[] (a)at(5.2,-2.9){\small Split $3$};
\node[] (a)at(7.95,-3.5){\small$2=1+1$};
\node[] (a)at(7.95,-3.1){\small Split $2$};

\node[] (a)at(6.7,-1.2){\small$2=1+1$};
\node[] (a)at(6.7,-0.8){\small Split $2$};

\draw[->,>=latex,teal,very thick] (7.5,-5.7) -- (5.5,-4.3);
\draw[->,>=latex,teal,very thick] (7.5,-5.7) -- (9.5,-4.3);
\draw[->,>=latex,red,dashed,very thick] (5.5,-3.7) -- (7.4,-2.3);
\draw[->,>=latex,teal,very thick] (9.5,-3.7) -- (7.5,-2.3);
\draw[->,>=latex,teal,thick] (7.5,-1.7) -- (7.5,-0.3);

\end{tikzpicture}%
}
\caption{Four-agent integer partition graph. It has four levels: $L_{1}$ to $L_{4}$, numbered from bottom to top.} \label{IPG}
\end{figure}
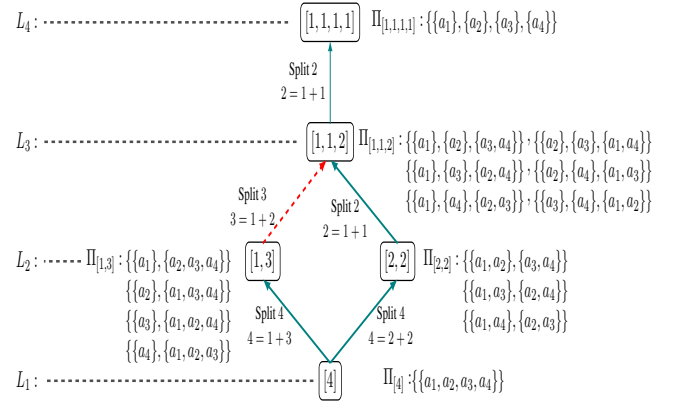

In the following subsections, we will explore both exact and approximate methods for solving the coalition structure generation problem. This problem is notoriously difficult to solve, and there is ongoing research on developing more efficient algorithms. For a more in-depth analysis of these approaches, we refer interested readers to~\cite{rahwan2015coalition}.

\subsection{Exact approaches}

Several approaches have been proposed in prior studies for optimally solving the coalition structure generation problem, including dynamic programming and anytime algorithms, which have achieved some success~\cite{rahwan2015coalition}. Yeh~\cite{yeh1986dynamic} was the first to propose a dynamic programming (DP) algorithm to solve the complete set partitioning problem, which was later rediscovered by Rothkopf et al.~\cite{rothkopf1998computationally} for solving the winner determination problem in combinatorial auctions. This algorithm guarantees finding an optimal solution, but it must be run to completion. An improved dynamic programming algorithm was proposed by Rahwan et al.~\cite{rahwan2008improved}, which showed that some calculations in the algorithm by Yeh~\cite{yeh1986dynamic} were unnecessary for finding the optimal solution. Michalak et al.~\cite{michalak2016hybrid} proposed an optimal dynamic programming algorithm (ODP) that avoids many evaluation operations performed by DP, making it evaluate an optimal set of coalitions to ensure an optimal solution. ODP runs in $O(3^n)$ and provides no intermediate solutions. Bjorklund et al.~\cite{bjorklund2009set} proposed an inclusion-exclusion-based dynamic programming algorithm that theoretically runs in $O(2^n)$ time but was shown to run in $6^n$ time when tested in practice~\cite{michalak2016hybrid}. Changder and Aknine~\cite{changder2019effective} developed another dynamic programming algorithm by proving that evaluating the coalitions of even sizes is sufficient for finding an optimal solution.

To allow premature termination while providing guaranteed bounds on the quality of the solution found at each moment during the coalition structure search process, many works have developed anytime algorithms. Sandholm et al.~\cite{sandholm1999coalition} proposed the first anytime algorithm based on a coalition structure graph (see Figure 1), where searching the lowest two levels of the coalition structure graph guarantees a certain quality of the solution found. Similarly, Dang et al.~\cite{dang2004generating} used the same coalition structure graph, and their algorithm was shown to empirically generate tighter quality guarantees than Sandholm et al.'s algorithm. Rahwan et al.~\cite{rahwan2009anytime} developed an anytime algorithm based on a new representation of the search space, the integer partition graph (see Figure \ref{IPG}), called IP, which groups coalition structures according to the number and size of all coalitions they contain and uses a branch and bound technique. This algorithm finds higher quality solutions faster than previous anytime algorithms~\cite{rahwan2009anytime}. Michalak et al.~\cite{michalak2010distributed} developed a decentralized version of IP. However, the worst-case running time of current anytime algorithms is $O(n^n)$~\cite{rahwan2015coalition}.

Several algorithms have been developed that combine techniques from the dynamic programming approach and the anytime approach. Michalak et al.~\cite{michalak2016hybrid} proposed the first such algorithm, namely, ODP-IP, which combines IDP and IP. To date, the hybrid algorithms ODP-IP, ODSS~\cite{changder2020odss}, and BOSS~\cite{Changder_Aknine_Ramchurn_Dutta_2021} are the fastest exact algorithms for the CSG problem and are efficient for many instances when the time required to produce the optimal solution is shorter than the available time for running the algorithm. 

\subsection{Heuristic Algorithms}

There have been several heuristic algorithms proposed for the coalition structure generation problem (CSG) that prioritize speed over finding an optimal solution. When the number of agents increases and the problem becomes too difficult, these algorithms are often the only practical option~\cite{rahwan2015coalition}. One of the earliest heuristic algorithms for CSG focused on task allocation and was proposed by Shehory and Kraus~\cite{shehory1993coalition,shehory1995task,shehory1996formation,shehory1998methods}. These algorithms restrict the size of evaluated coalitions to a certain number of agents.~\cite{sen2000searching} introduced a genetic algorithm for CSG, where coalition structures represent the population and are selected based on the outcome of the evaluation. Other heuristic approaches have been studied for the CSG problem, such as simulated annealing~\cite{keinanen2009simulated}, greedy-based methods~\cite{di2010coalition}, tabu search~\cite{hussin2016heuristic}, and code permutation search~\cite{Taguelmimt_Aknine_Boukredera_Changder_2021}.

\subsection{Scalable Solutions}

The coalition structure generation (CSG) problem is notoriously challenging to solve at scale, with very few scalable solutions available. One such solution, proposed by~\cite{wu2020monte}, is based on Monte-Carlo Tree Search, which finds solutions by sampling the coalition structure graph and partially expanding a search tree that corresponds to the explored partial search space. Although this approach can provide optimal solutions, it requires generating the entire search tree by adding all nodes. Another solution proposed by~\cite{farinelli2013c} is a hierarchical clustering approach that builds a high-quality coalition structure by merging coalitions based on a similarity criterion, and a similar approach based on GRASP was proposed in~\cite{di2010coalition}. However, these methods are limited in that they only consider the value of the best merge of two coalitions, with additional operations of splitting a coalition into two and exchanging a pair of agents between two coalitions. In addition, the search algorithms FACS and PICS proposed in~\cite{9643288,10098066} generate coalition structures based on code permutations applied to selected initial vectors of a different search space representation. To the best of our knowledge, the PICS~\cite{10098066} and CSG-UCT~\cite{wu2020monte} are the best performing of the prior algorithms. Although some success has been achieved in solving large-scale CSG problems with these techniques, scalability remains a challenging issue, and PICS currently represents the most promising technique for addressing this problem.

\section{Reducing Memory Requirements}

To save memory, we store the $V_{t}$ table only once and use it for all processes of SMART. Observe that for each CDP and GRAD process, we need to compute the value $V_{t}(C)$ for all considered coalitions in the set of sizes during each execution of the algorithm. However, given that the processes of the CDP and GRAD algorithms gradually store the maximum values of the coalitions, working with the same $V_{t}$ table does not interfere with this working. The underlying reason is that for each process of CDP and GRAD, the $V_{t}$ table stores at least the needed values. In case the value of a coalition is higher than the one computed, it only enables the CDP or GRAD process to search for more coalition structures and never less. 
Furthermore, in terms of the waiting caused by locking the shared memory location when updating the value of a coalition, this has minimal impact because each process has a different set of coalitions and hence, requires access to different locations



\section{Details on the SetTime and GeneratedSubspaces functions}

GeneratedSubspaces and SetTime are simple procedures used in SSD and SOFT. GeneratedSubspaces starts with the size $n$ and follows all the splittings of sizes that belong to the set of sizes evaluated and returns all the nodes connected to the bottom node. 

SetTime is a simple function. It sums the times associated with the sizes that belong to the size set.



Algorithm \ref{partition} shows the pseudocode of the Partition function.

{\SetAlgoNoLine%
\begin{algorithm}[h!]
\KwIn{The partition table $\mathcal{P}_{t}$. A set of agents $A$.}
\KwOut{An optimal partition of $A$.}
\DontPrintSemicolon

$\mathcal{CS}^{*} \leftarrow \{A\}$\;
\ForEach{$C \subseteq \mathcal{CS}^{*}$}{
\If($\triangleright$ if $\{C\}$ is not an optimal partition of $C$){$\mathcal{P}_{t}(C) \neq \{C\}$}{
$\mathcal{CS}^{*} \leftarrow (\mathcal{CS}^{*} \ \{C\}) \cup \mathcal{P}_{t}(C)$\;
Go to line 2 and start with the new $\mathcal{CS}^{*}$\;
}
}
Return $\mathcal{CS}^{*}$\;
 \caption{Partition} \label{partition}
\end{algorithm}}%


\section{Proofs}


\noindent \textbf{Time complexity of the SOFT algorithm.}\: 
The time complexity of SSD is $\mathcal{O}(2^{2n} \times n^{2} \times \frac{e^{\pi\sqrt{\frac{2n}{3}}}}{n})$, which corresponds to the cost of pairing each set with the other sets. In SOFT, we evaluate the sets without pairing them. This is equivalent in SSD to pairing only one set with the others. Thus, the time complexity of SOFT is $\mathcal{O}(2^{n} \times n^{2} \times \frac{e^{\pi\sqrt{\frac{2n}{3}}}}{n})$.

\section{Additional Figures}

Figure \ref{dips} shows an illustration of how DIPS proceeds on an example with ten agents and a three-part subspace $[1,2,7]$.

\begin{figure}[h!]
\small
\centering
\resizebox{0.9\columnwidth}{6.5cm}{%
\begin{tikzpicture}
\large
\node[draw,rectangle,fill=black!25!black!30,rounded corners=5pt, minimum height=1cm, minimum width=1.7cm, text width=1.6cm, text centered] (a)at(1.5,-0.5){Coalitions of size \textbf{1}};

\node[draw,rectangle,fill=LightCyan2,rounded corners=3pt, minimum height=4.0cm, minimum width=1.8cm, text width=0.3cm, text centered] (a)at(1.5,-3.05){};

\node[draw,rectangle,fill=black!25!black!30,rounded corners=5pt, minimum height=1cm, minimum width=1.7cm, text width=1.6cm, text centered] (a)at(7.5,-0.5){Coalitions of size \textbf{2}};

\node[draw,rectangle,rounded corners=3pt, minimum height=5.5cm, minimum width=1.8cm, text width=0.3cm, text centered] (a)at(7.5,-3.8){};

\node[draw,rectangle,fill=black!25!black!30,rounded corners=5pt, minimum height=1cm, minimum width=1.7cm, text width=1.6cm, text centered] (a)at(13.5,-0.5){Coalitions of size \textbf{7}};

\node[draw,rectangle,rounded corners=3pt, minimum height=6.5cm, minimum width=1.8cm, text width=0.3cm, text centered] (a)at(13.5,-4.3){};

\Large
\node[] (a)at(1.5,-1.5){$.$};
\node[] (a)at(1.5,-1.7){$.$};
\node[] (a)at(1.5,-1.9){$.$};
\node[] (a)at(1.5,-2.6){$\mathcal{C}_{x}$};
\node[] (a)at(1.5,-3.5){$\mathcal{C}_{y}$};
\node[] (a)at(1.5,-4.2){$.$};
\node[] (a)at(1.5,-4.4){$.$};
\node[] (a)at(1.5,-4.6){$.$};
\Large

\draw[->,>=latex,black, line width=0.6pt] (2,-2.6) -- (4.5,-2.6);
\draw[-,>=latex,red,very thick, line width=4pt] (4.5,-2.1) -- (5.2,-3.1);
\draw[-,>=latex,red,very thick, line width=4pt] (4.5,-3.1) -- (5.2,-2.1);

\draw[->,>=latex,red,dashed, thick, line width=1pt] (4.15,-2.5) -- (4.15,0);
\Large
\node[] (a)at(4.15,0.4){$v(\mathcal{C}_{x}) + Max_{2} + Max_{7} < v(\mathcal{CS}^{+})$};

\draw[->,>=latex,linecolor,dashed, thick, line width=1pt] (4.5,-4.2) -- (4.5,-6.3);
\Large
\node[] (a)at(4.5,-6.8){$v(\mathcal{C}_{x}) + Max_{2} + Max_{7} \geq v(\mathcal{CS}^{+})$};

\Large
\node[] (a)at(7.5,-1.5){$.$};
\node[] (a)at(7.5,-1.9){$.$};
\node[] (a)at(7.5,-2.3){$.$};
\node[] (a)at(7.5,-2.7){$.$};
\node[] (a)at(7.5,-3.5){$\mathcal{C}_{i}$};
\node[] (a)at(7.5,-4.4){$\mathcal{C}_{j}$};
\node[] (a)at(7.5,-5){$.$};
\node[] (a)at(7.5,-5.4){$.$};
\node[] (a)at(7.5,-5.8){$.$};

\scriptsize

\draw[->,>=latex,black, line width=0.6pt] (2,-3.5) -- (7.1,-3.5);
\draw[->,>=latex,black, line width=0.6pt] (2,-3.5) -- (7.1,-4.4);

\Large
\node[] (a)at(13.5,-1.5){$.$};
\node[] (a)at(13.5,-2){$.$};
\node[] (a)at(13.5,-2.5){$.$};
\node[] (a)at(13.5,-3){$.$};
\node[] (a)at(13.5,-3.5){$.$};
\node[] (a)at(13.5,-4.4){$\mathcal{C}_{p}$};
\node[] (a)at(13.5,-5.3){$\mathcal{C}_{q}$};
\node[] (a)at(13.5,-5.9){$.$};
\node[] (a)at(13.5,-6.4){$.$};
\node[] (a)at(13.5,-6.9){$.$};

\Large

\draw[->,>=latex,black, line width=0.6pt] (8,-3.5) -- (10.5,-3.5);
\draw[-,>=latex,red,very thick, line width=4pt] (10.5,-3) -- (11.2,-4);
\draw[-,>=latex,red,very thick, line width=4pt] (10.5,-4) -- (11.2,-3);

\draw[->,>=latex,black, line width=0.6pt] (8,-4.4) -- (13.1,-4.4);
\draw[->,>=latex,black, line width=0.6pt] (8,-4.4) -- (13.1,-5.3);

\draw[->,>=latex,red,dashed, thick, line width=1pt] (10.15,-3.4) -- (10.15,0);
\Large
\node[] (a)at(10.65,0.4){$v(\mathcal{C}_{y}) + v(\mathcal{C}_{i}) + Max_{7} < v(\mathcal{CS}^{+})$};

\draw[->,>=latex,linecolor,dashed, thick, line width=1pt] (10.5,-5.1) -- (10.5,-7.6);
\Large
\node[] (a)at(9.8,-8.1){$v(\mathcal{C}_{y}) + v(\mathcal{C}_{j}) + Max_{7} \geq v(\mathcal{CS}^{+})$};

\end{tikzpicture}%
}
\caption{Illustration of the branch-and bound technique when searching the node $[1,2,7]$. In this example, DIPS constructs 10 search trees. The roots of these trees are the singleton coalitions (because the first part of the node is of size 1), as shown in the blue rectangle. The algorithm recognizes that the coalition structures containing the coalition $\mathcal{C}_{x}$ (because $v(\mathcal{C}_{x}) + Max_{2} + Max_{7} < v(\mathcal{CS}^{+})$) or both the coalitions $\mathcal{C}_{y}$ and $\mathcal{C}_{i}$ (because $v(\mathcal{C}_{y}) + v(\mathcal{C}_{i}) + Max_{7} < v(\mathcal{CS}^{+})$), cannot be optimal. Thus, DIPS does not search further in the tree. Here, $Max_{i}$ is the maximum value a coalition of size $i$ can take and $\mathcal{CS}^{+}$ is the last best solution found.} \label{dips}
\end{figure}
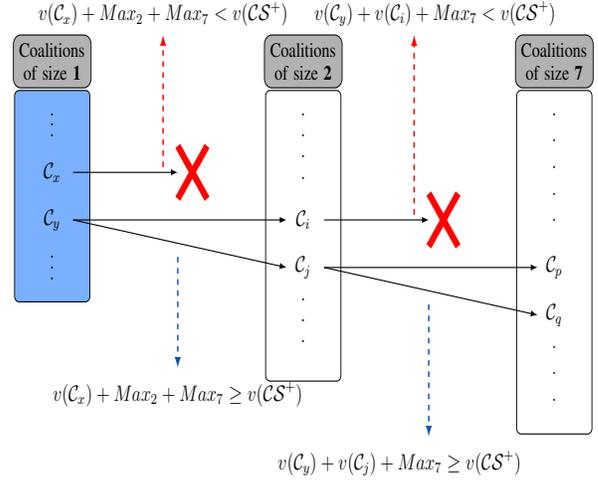

Figure \ref{poi} shows how the POI algorithm (i.e. the modified version of ODP-IP, which is a parallel version that we built), which is used for comparison, works. POI uses the same number of processes as SMART: one process to run the IDP algorithm and the remaining processes to search the integer partition graph.

\begin{figure}[h!]
\small
\centering
\resizebox{0.999\columnwidth}{3.1cm}{%
\begin{tikzpicture}
\large

\node[draw,rectangle,rounded corners=5pt, minimum height=1cm, minimum width=1.7cm, text width=1.6cm, text centered] (a)at(5.0,2.5){POI process 1};

\draw[->,>=latex,red,dashed, thick, line width=1pt] (5.0,1.975) -- (1.5,0.025);

\draw[->,>=latex,linecolor,dashed, thick, line width=1pt] (10.0,1.975) -- (4.0,0.025);

\draw[->,>=latex,vert,dashed, thick, line width=1pt] (15.0,1.975) -- (6.5,0.025);

\draw[->,>=latex,linecolor,dashed, thick, line width=1pt] (10.0,1.975) -- (11.5,0.025);

\draw[->,>=latex,linecolor,dashed, thick, line width=1pt] (10.0,1.975) -- (19,0.025);

\node[draw,rectangle,rounded corners=5pt, minimum height=1cm, minimum width=1.7cm, text width=1.6cm, text centered] (a)at(10.0,2.5){POI process 2};

\draw[->,>=latex,red,dashed, thick, line width=1pt] (5.0,1.975) -- (9.0,0.025);

\draw[->,>=latex,red,dashed, thick, line width=1pt] (5.0,1.975) -- (16.5,0.025);

\node[draw,rectangle,rounded corners=5pt, minimum height=1cm, minimum width=1.7cm, text width=1.6cm, text centered] (a)at(15.0,2.5){POI process 3};

\draw[->,>=latex,vert,dashed, thick, line width=1pt] (15.0,1.975) -- (14,0.025);

\node[draw,rectangle,rounded corners=5pt, minimum height=1cm, minimum width=1.7cm, text width=1.6cm, text centered] (a)at(1.5,-0.5){2,4,4};

\node[draw,rectangle,rounded corners=5pt, minimum height=1cm, minimum width=1.7cm, text width=1.6cm, text centered] (a)at(4.0,-0.5){1,2,2,6};

\node[draw,rectangle,rounded corners=5pt, minimum height=1cm, minimum width=1.7cm, text width=1.6cm, text centered] (a)at(6.5,-0.5){3,3,4};

\node[draw,rectangle,rounded corners=5pt, minimum height=1cm, minimum width=1.7cm, text width=1.6cm, text centered] (a)at(9.0,-0.5){1,1,1,3,4};

\node[draw,rectangle,rounded corners=5pt, minimum height=1cm, minimum width=1.7cm, text width=1.6cm, text centered] (a)at(11.5,-0.5){1,3,6};

\node[draw,rectangle,rounded corners=5pt, minimum height=1cm, minimum width=1.7cm, text width=1.6cm, text centered] (a)at(14.0,-0.5){2,2,3,3};

\node[draw,rectangle,rounded corners=5pt, minimum height=1cm, minimum width=1.7cm, text width=1.6cm, text centered] (a)at(16.5,-0.5){1,2,2,5};
 
\node[draw,rectangle,rounded corners=5pt, minimum height=1cm, minimum width=1.7cm, text width=1.6cm, text centered] (a)at(19.0,-0.5){1,4,5};

\end{tikzpicture}%
}
\caption{Illustration of the subspace distribution technique used in POI (Parallel ODP-IP). The subspaces are organized according to their upper bounds. The subspace [2,4,4] is the highest upper bound node and [1,1,1,1,6] is the lowest upper bound node. The search is distributed between several processes. Each subspace is searched by the first available process. For example in this Figure, three processes share the search.}\label{poi}
\end{figure}
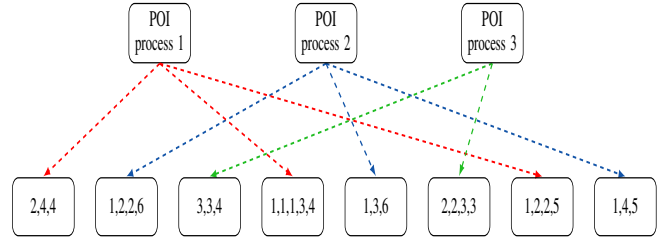

\subsection{Illustration of the subspace pruning in GRAD}

In section 4.2 in the main paper, we explained that after evaluating all the coalitions of size $x$, a GRAD process updates the integer partition graph by adding the edges that result from splitting $x$ into two integers. 
We illustrate this in Figure \ref{IPGSDP_only_hybrid} using a ten-agent integer partition graph. Assume that a GRAD process needs to evaluate the coalitions of sizes $s \in \{2,4,6,10\}$ and assume that, as of now, the GRAD process has only evaluated the coalitions of sizes 2 and 4. 
Given the initial evaluations of SDP, SIP proceeds as follows: it first adds the green edges to the integer partition graph, representing the splitting of the size 2. Once the GRAD process completes the evaluation of the coalitions of size 4, it adds the blue edges resulting from the division of size 4. 
Importantly, these edges are placed wherever it is possible to divide a size 2 or 4 within the graph.
This allows us to connect the green and blue nodes to the bottom node by evaluating the grand coalition. Before evaluating the grand coalition, the nodes are only connected to the gray nodes at level 2.
By evaluating the coalition of size $n$, all the nodes at level 2, namely, $[1,9]$, $[2,8]$, $[3,7]$, $[4,6]$, and $[5,5]$ are now connected to the bottom node through the black edges.
As a result, the gray, green, and blue nodes are now connected to the bottom node.
These nodes are hence fully searched when we evaluate the coalition of size $n$.

\begin{figure*}[h!]
\centering
\resizebox{0.89\textwidth}{14.7cm}{%
\begin{tikzpicture}


\node[draw,rectangle,rounded corners=4pt] (a)at(7.5,0){$1,1,1,1,1,1,1,1,1,1$};

\node[draw,rectangle,rounded corners=4pt] (a)at(7.5,-1.25){$1,1,1,1,1,1,1,1,2$};

\node[draw,rectangle,rounded corners=4pt] (a)at(5.5,-2.5){$1,1,1,1,1,1,1,3$};

\node[draw,rectangle,rounded corners=4pt] (a)at(9.5,-2.5){$1,1,1,1,1,1,2,2$};

\node[draw,rectangle,rounded corners=4pt] (a)at(4.5,-3.75){$1,1,1,1,1,1,4$};

\node[draw,rectangle,rounded corners=4pt] (a)at(7.5,-3.75){$1,1,1,1,1,2,3$};

\node[draw,rectangle,rounded corners=4pt] (a)at(10.5,-3.75){$1,1,1,1,2,2,2$};

\node[draw,rectangle,rounded corners=4pt] (a)at(2,-5){$1,1,1,1,1,5$};

\node[draw,rectangle,rounded corners=4pt] (a)at(4.75,-5){$1,1,1,1,2,4$};

\node[draw,rectangle,rounded corners=4pt] (a)at(7.5,-5){$1,1,1,1,3,3$};

\node[draw,rectangle,rounded corners=4pt] (a)at(10.25,-5){$1,1,1,2,2,3$};

\node[draw,rectangle,rounded corners=4pt] (a)at(13,-5){$1,1,2,2,2,2$};

\node[draw,rectangle,rounded corners=4pt, fill=LightCyan2!95] (a)at(0.75,-6.25){$1,1,1,1,6$};

\node[draw,rectangle,rounded corners=4pt] (a)at(3,-6.25){$1,1,1,2,5$};

\node[draw,rectangle,rounded corners=4pt] (a)at(5.25,-6.25){$1,1,1,3,4$};

\node[draw,rectangle,rounded corners=4pt] (a)at(7.5,-6.25){$1,1,2,2,4$};

\node[draw,rectangle,rounded corners=4pt] (a)at(9.75,-6.25){$1,1,2,3,3$};

\node[draw,rectangle,rounded corners=4pt] (a)at(12,-6.25){$1,2,2,2,3$};

\node[draw,rectangle,rounded corners=4pt] (a)at(14.25,-6.25){$2,2,2,2,2$};

\node[draw,rectangle,rounded corners=4pt] (a)at(-0.5,-7.5){$1,1,1,7$};

\node[draw,rectangle,rounded corners=4pt, fill=LightCyan2!95] (a)at(1.5,-7.5){$1,1,2,6$};

\node[draw,rectangle,rounded corners=4pt] (a)at(3.5,-7.5){$1,1,3,5$};

\node[draw,rectangle,rounded corners=4pt] (a)at(5.5,-7.5){$1,2,2,5$};

\node[draw,rectangle,rounded corners=4pt] (a)at(7.5,-7.5){$1,1,4,4$};

\node[draw,rectangle,rounded corners=4pt] (a)at(9.5,-7.5){$1,2,3,4$};

\node[draw,rectangle,rounded corners=4pt] (a)at(11.5,-7.5){$2,2,2,4$};

\node[draw,rectangle,rounded corners=4pt] (a)at(13.5,-7.5){$1,3,3,3$};

\node[draw,rectangle,rounded corners=4pt] (a)at(15.5,-7.5){$2,2,3,3$};

\node[draw,rectangle,rounded corners=4pt, fill=vert!85] (a)at(0.5,-8.75){$1,1,8$};

\node[draw,rectangle,rounded corners=4pt] (a)at(2.5,-8.75){$1,2,7$};

\node[draw,rectangle,rounded corners=4pt, fill=LightCyan2!95] (a)at(4.5,-8.75){$1,3,6$};

\node[draw,rectangle,rounded corners=4pt, fill=LightCyan2!95] (a)at(6.5,-8.75){$2,2,6$};

\node[draw,rectangle,rounded corners=4pt] (a)at(8.5,-8.75){$1,4,5$};

\node[draw,rectangle,rounded corners=4pt] (a)at(10.5,-8.75){$2,3,5$};

\node[draw,rectangle,rounded corners=4pt] (a)at(12.5,-8.75){$2,4,4$};

\node[draw,rectangle,rounded corners=4pt] (a)at(14.5,-8.75){$3,3,4$};

\node[draw,rectangle,rounded corners=4pt, fill=black!35] (a)at(2.5,-10){$1,9$};

\node[draw,rectangle,rounded corners=4pt, fill=black!35] (a)at(5,-10){$2,8$};

\node[draw,rectangle,rounded corners=4pt, fill=black!35] (a)at(7.5,-10){$3,7$};

\node[draw,rectangle,rounded corners=4pt, fill=black!35] (a)at(10,-10){$4,6$};

\node[draw,rectangle,rounded corners=4pt, fill=black!35] (a)at(12.5,-10){$5,5$};

\node[draw,rectangle,rounded corners=4pt] (a)at(7.5,-11.25){$10$};


\draw[->,>=latex,black] (7.5,-10.95) -- (7.5,-10.3);

\draw[->,>=latex,black] (7.5,-10.95) -- (5,-10.3);

\draw[->,>=latex,black] (7.5,-10.95) -- (2.5,-10.3);

\draw[->,>=latex,black] (7.5,-10.95) -- (10,-10.3);

\draw[->,>=latex,black] (7.5,-10.95) -- (12.5,-10.3);


\draw[->,>=latex,vert!85] (5,-9.7) -- (0.5,-9.05);

\draw[->,>=latex,blue] (10,-9.7) -- (4.5,-9.05);


\draw[->,>=latex,blue] (10,-9.7) -- (6.5,-9.05);


\draw[->,>=latex,vert!85] (2.5,-8.45) -- (-0.5,-7.8);

\draw[->,>=latex,vert!85] (10.5,-8.45) -- (3.5,-7.8);


\draw[->,>=latex,vert!85] (6.5,-8.45) -- (1.5,-7.8);


\draw[->,>=latex,blue] (8.5,-8.45) -- (3.5,-7.8);

\draw[->,>=latex,blue] (8.5,-8.45) -- (5.5,-7.8);

\draw[->,>=latex,vert!85] (12.5,-8.45) -- (7.5,-7.8);
\draw[->,>=latex,blue] (12.5,-8.45) -- (9.5,-7.8);
\draw[->,>=latex,blue] (12.5,-8.45) -- (11.5,-7.8);

\draw[->,>=latex,blue] (14.5,-8.45) -- (13.5,-7.8);
\draw[->,>=latex,blue] (14.5,-8.45) -- (15.5,-7.8);


\draw[->,>=latex,vert!85] (1.5,-7.2) -- (0.75,-6.55);


\draw[->,>=latex,vert!85] (5.5,-7.2) -- (3,-6.55);

\draw[->,>=latex,blue] (7.5,-7.2) -- (5.25,-6.55);

\draw[->,>=latex,blue] (7.5,-7.2) -- (7.5,-6.55);

\draw[->,>=latex,vert!85] (9.5,-7.2) -- (5.25,-6.55);

\draw[->,>=latex,blue] (9.5,-7.2) -- (9.75,-6.55);
\draw[->,>=latex,blue] (9.5,-7.2) -- (12,-6.55);

\draw[->,>=latex,vert!85] (15.5,-7.2) -- (9.75,-6.55);

\draw[->,>=latex,vert!85] (11.5,-7.2) -- (7.5,-6.55);
\draw[->,>=latex,blue] (11.5,-7.2) -- (12,-6.55);
\draw[->,>=latex,blue] (11.5,-7.2) -- (14.25,-6.55);



\draw[->,>=latex,vert!85] (3,-5.95) -- (2,-5.3);

\draw[->,>=latex,blue] (5.25,-5.95) -- (7.5,-5.3);

\draw[->,>=latex,blue] (5.25,-5.95) -- (10.25,-5.3);

\draw[->,>=latex,vert!85] (7.5,-5.95) -- (4.75,-5.3);

\draw[->,>=latex,blue] (7.5,-5.95) -- (10.25,-5.3);

\draw[->,>=latex,blue] (7.5,-5.95) -- (13,-5.3);

\draw[->,>=latex,vert!85] (9.75,-5.95) -- (7.5,-5.3);

\draw[->,>=latex,vert!85] (12,-5.95) -- (10.25,-5.3);

\draw[->,>=latex,vert!85] (14.25,-5.95) -- (13,-5.3);


\draw[->,>=latex,vert!85] (4.75,-4.7) -- (4.5,-4.05);

\draw[->,>=latex,blue] (4.75,-4.7) -- (7.5,-4.05);

\draw[->,>=latex,blue] (4.75,-4.7) -- (10.5,-4.05);

\draw[->,>=latex,vert!85] (10.25,-4.7) -- (7.5,-4.05);

\draw[->,>=latex,vert!85] (13,-4.7) -- (10.5,-4.05);


\draw[->,>=latex,blue] (4.5,-3.45) -- (5.5,-2.8);

\draw[->,>=latex,blue] (4.5,-3.45) -- (9.5,-2.8);

\draw[->,>=latex,vert!85] (7.5,-3.45) -- (5.5,-2.8);

\draw[->,>=latex,vert!85] (10.5,-3.45) -- (9.5,-2.8);


\draw[->,>=latex,vert!85] (9.5,-2.2) -- (7.5,-1.55);

\draw[->,>=latex,vert!85] (7.5,-0.95) -- (7.5,-0.3);

\end{tikzpicture}%
}
\caption{Searched subspaces after evaluation of all coalitions of sizes 2 and 4, given 10 agents. Green and blue colored subspaces are fully searched by SDP, whereas white colored subspaces are not yet searched.} \label{IPGSDP_only_hybrid}
\end{figure*}
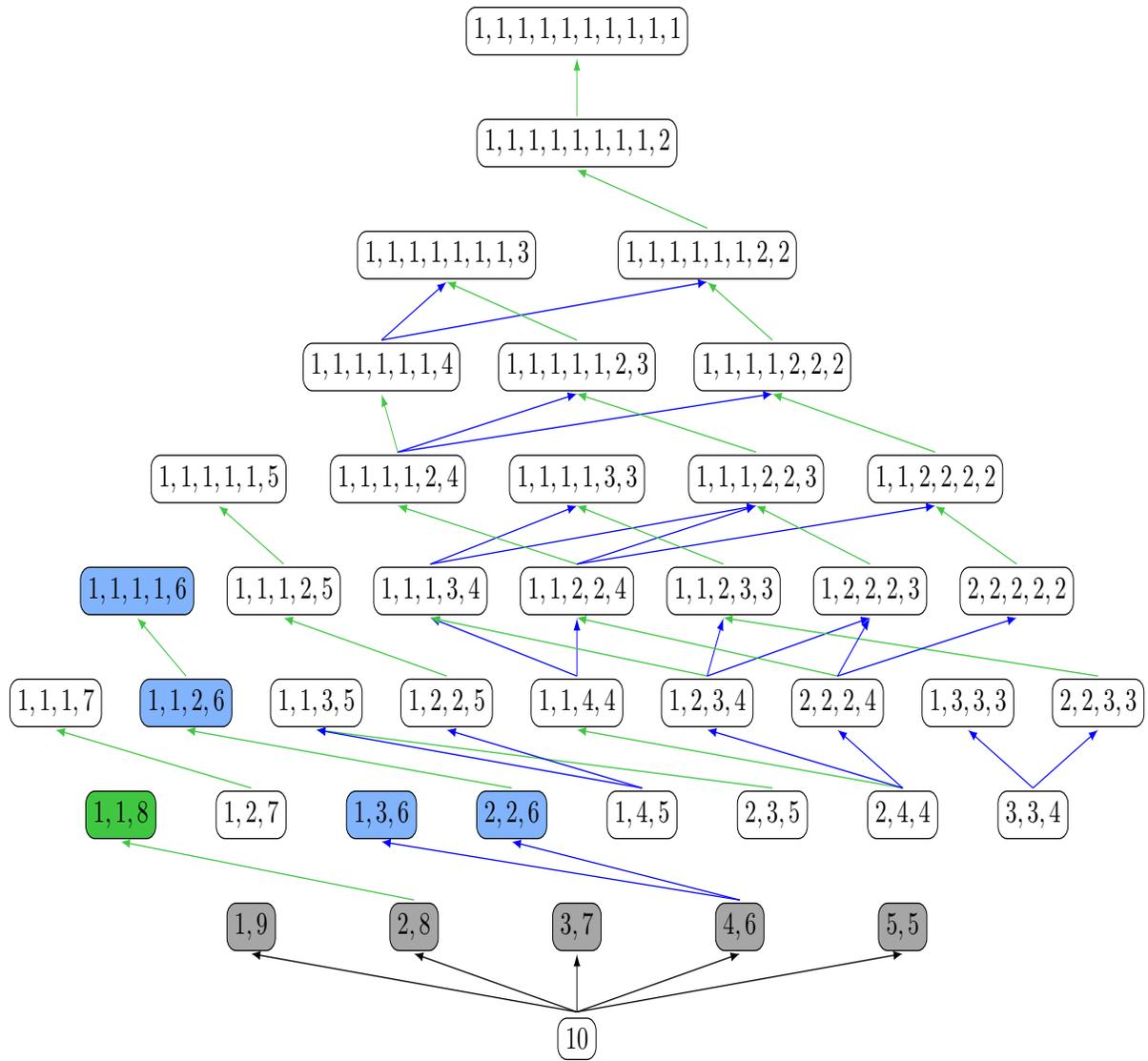

\section{Additional Experimental Results}

As can be seen in Table \ref{tabresultB}, CDP is faster than P-IDP by up to 35\%.

\begin{table*}[h!]
\begin{center}
\renewcommand{\arraystretch}{1.75}
\begin{tabular}{|c|c|c|c|c|c|c|}
\hline
\textbf{Number of Agents} & \multicolumn{4}{c|}{\textbf{Execution Time}} &  \multicolumn{2}{c|}{\textbf{Time Gain}} \\
\cline{2-5}
 &  \textbf{CDP} & \textbf{IDP} & \textbf{ODP} & \textbf{P-IDP} & \textbf{(IDP)} & \textbf{(P-IDP)} \\
\hline
20 & $1.7$ & $3.7$ & $2.2$ & $2.0$ & 54\% & $15\%$ \\
\hline
21 & $7.4$ & $15.9$ & $10.3$ & $7.9$ & 52\% & $7\%$ \\
\hline
22 & $12.3$ & $24.7$ & $19.8$ & $15.7$ & 50\% & $22\%$ \\
\hline
23 & $57$ & $131$ & $79$ & $69$ & 57\% & $18\%$ \\
\hline
24 & $205$ & $507$ & $382$ & $257$ & 60\% & $21\%$ \\
\hline
25 & $427$ & $887$ & $675$ & $593$ & 52\% & $28\%$ \\
\hline
26 & $1781$ & $3659$ & $2846$ & $2267$ & 52\% & $22\%$ \\
\hline
27 & $3390$ & $7078$ & $5508$ & $5196$ & 53\% & $35\%$ \\
\hline
\end{tabular}
\end{center}
\caption{Time in seconds of CDP, IDP, ODP and P-IDP. Time  Gain column shows the time gain achieved by CDP compared to IDP and P-IDP.}\label{tabresultB}
\end{table*}

Figure \ref{additionalResults} shows additional results of SMART on other value distributions: Zipf, SVA-Beta~\cite{changder2020odss}.

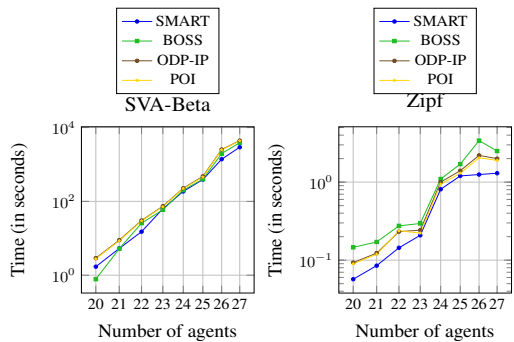
\begin{figure}[h!]
\begin{center}
\small
     \resizebox{0.39\columnwidth}{4.5cm}{%
     \begin{subfigure}{0.28\textwidth}
         \centering
         \begin{tikzpicture}
	\begin{loglogaxis}[	grid= major ,
            title=SVA-Beta ,
			width=0.8\textwidth ,
            xlabel = {Number of agents} ,
			ylabel = {Time (in seconds)} ,
            width=5cm,height=5cm,
            xtick={20,21,22,23,24,25,26,27},
            xticklabels={20,21,22,23,24,25,26,27},
            ymode=log,
            label style={font=\large},
			title style={font=\Large},
			tick label style={font=\footnotesize},
            legend entries={SMART, BOSS, ODP-IP, POI},
			legend style={at={(0.5,1.715)},anchor=north}]
			\addplot+[ultra thin,mark size=1pt] coordinates {(20,1.7) (21,5.3) (22,15.1) (23,63) (24,186) (25,389) (26,1387) (27,2932) }; 
			\addplot+[ultra thin,mark size=1pt,vert,mark options={fill=vert}] coordinates {(20,0.78) (21,5.2) (22,25.7) (23,59) (24,201) (25,398) (26,1973) (27,3872) };
			\addplot+[ultra thin,mark size=1pt] coordinates {(20,2.9) (21,8.9) (22,30.4) (23,74) (24,228) (25,476) (26,2546) (27,4429) };
			\addplot+[ultra thin,mark size=1pt,gold] coordinates {(20,2.8) (21,8.6) (22,29.6) (23,72) (24,224) (25,454) (26,2499) (27,4376) };
			
	\end{loglogaxis}
    \end{tikzpicture}\\
         \label{fig:three sin x}
     \end{subfigure}
     }
     \resizebox{0.39\columnwidth}{4.5cm}{%
     \begin{subfigure}{0.28\textwidth}
         \centering
         \begin{tikzpicture}
	\begin{loglogaxis}[	grid= major ,
            title=Zipf,
			width=0.6\textwidth ,
            xlabel = {Number of agents} ,
			ylabel = {Time (in seconds)} ,
            width=5cm,height=5cm,
            xtick={20,21,22,23,24,25,26,27},
            xticklabels={20,21,22,23,24,25,26,27},
            ymode=log,
            label style={font=\large},
			title style={font=\Large},
			tick label style={font=\footnotesize},
            legend entries={SMART, BOSS, ODP-IP, POI},
			legend style={at={(0.5,1.715)},anchor=north}]
			\addplot+[ultra thin,mark size=1pt] coordinates {(20,0.057) (21,0.085) (22,0.144) (23,0.208) (24,0.81) (25,1.2) (26,1.25) (27,1.3) };
			\addplot+[ultra thin,mark size=1pt,vert,mark options={fill=vert}] coordinates {(20,0.146) (21,0.171) (22,0.275) (23,0.295) (24,1.096) (25,1.7) (26,3.4) (27,2.5) };
			\addplot+[ultra thin,mark size=1pt] coordinates {(20,0.093) (21,0.123) (22,0.232) (23,0.242) (24,1) (25,1.4) (26,2.2) (27,2) };
			\addplot+[ultra thin,mark size=1pt,gold] coordinates {(20,0.090) (21,0.119) (22,0.238) (23,0.222) (24,0.94) (25,1.3) (26,2.05) (27,1.9) };
	\end{loglogaxis}
    \end{tikzpicture}\\
         \label{fig:y equals x}
     \end{subfigure}
     }

             \caption{Time performance in seconds of SMART, BOSS, ODP-IP and POI for a number of agents between 20 and 27.}
        \label{additionalResults}
\normalsize
\end{center}
\end{figure}


\end{document}